\DeclareMathSymbol{\lsb@l}{\mathalpha}{letters}{`l}
\algrenewcommand\textproc{\sffamily}
\algrenewcommand\algorithmicindent{1em}
\def\K{\mathbb{K}}
\def\N{\mathbb{N}}
\def\Z{\mathbb{Z}}
\def\Q{\mathbb{Q}}
\def\C{\mathbb{C}}
\def\F{\mathbb{F}}
\newcommand{\M}{\mathsf{\bf M}}
\def\Id{\operatorname{Id}}
\def\inv{\operatorname{inv}}
\DeclarePairedDelimiter\floor{\lfloor}{\rfloor}
\def\mproof#1{%
    \trivlist
    \item[%
        \hskip 10\p@
        \hskip \labelsep
        {\sc #1.}%
    ]
    \ignorespaces
}
\def\mqed{%
    \unskip
    \kern 10\p@
    \hfill
    \begingroup
        \unitlength\p@
        \linethickness{.4\p@}%
        \framebox(6,6){}%
    \endgroup
    \global\@qededtrue
}
\newtheorem{theorem}{Theorem}
\newtheorem{corollary}[theorem]{Corollary}
\newtheorem{definition}{Definition}
\newdefinition{example}[theorem]{Example}
\newcommand{\quotient}[2]{%
\leavevmode%
\kern-.1em
\raise.2ex\hbox{\(#1\)}\kern-.1em%
/%
\kern-.1em\lower.25ex\hbox{\(#2\)}}
\begin{document}

\title{Fast Computation of the $N$-th {T}erm of a $q$-{H}olonomic {S}equence \\ and Applications}

\author[inria]{Alin Bostan}
\ead{alin.bostan@inria.fr}

\author[univie]{Sergey Yurkevich}
\ead{sergey.yurkevich@univie.ac.at}

\address[inria]{Inria, Université Paris-Saclay (France)}
\address[univie]{University of Vienna (Austria)}

\begin{abstract}
In 1977, Strassen invented a famous baby-step/giant-step algorithm that computes the factorial $N!$ in arithmetic complexity quasi-linear in $\sqrt{N}$. In 1988, the Chudnovsky brothers generalized Strassen’s algorithm to the computation of the $N$-th term of any holonomic sequence in essentially the same arithmetic complexity. We design $q$-analogues of these algorithms. We first extend Strassen’s algorithm to the computation of the $q$-factorial of $N$, then Chudnovskys' algorithm to the computation of the $N$-th term of any $q$-holonomic sequence. Both algorithms work in arithmetic complexity quasi-linear in~$\sqrt{N}$; surprisingly, they are simpler than their analogues in the holonomic case. We provide a detailed cost analysis, in both arithmetic and bit complexity models. Moreover, we describe various algorithmic consequences, including the acceleration of polynomial and rational solving of linear $q$-differential equations, and the fast evaluation of large classes of polynomials, including a family recently considered by Nogneng and Schost.
\end{abstract}

\maketitle

\section{Introduction}\label{sec:intro}

A classical question in algebraic complexity theory is: how fast can one
evaluate a univariate polynomial at one point? The precise formulation of this
question depends on the model of computation. We will mainly focus on the
\emph{arithmetic complexity} model, in which one counts base field operations
at unit cost.

Horner's rule evaluates a polynomial $P$ in $O(\deg(P))$ operations.
Ostrowski~\cite{Ostrowski54} conjectured in 1954 that this is \emph{optimal for generic polynomials}, \emph{i.e.}, whose coefficients are algebraically
independent over the prime subfield. This optimality result was proved a few years later by
Pan~\cite{Pan66}. 

However, most polynomials that one might wish to evaluate 
``in practice''
have coefficients which are not algebraically independent.
Paterson and Stockmeyer~\cite{PaSc73}
showed, using the \emph{baby-step/giant-step} technique,
that for any field~$\K$, an arbitrary polynomial $P \in \K[x]$ of
degree~$N$ can be evaluated at any point in an arbitrary $\K$-algebra $A$ 
using $O(\sqrt{N})$ \emph{nonscalar}
multiplications,  \emph{i.e.}, multiplications in~$A$.
However, their algorithm uses a linear amount of scalar
multiplications, so it is not well adapted to the evaluation at points from
the base field~$\K$, since in this case the total arithmetic complexity, counted in terms of operations in $\K$, remains linear in~$N$.

 For some families of polynomials, one can do much better.
Typical examples are $x^N$ and \[P_N(x)\coloneqq x^{N-1}+ \cdots + x + 1,\] which can be
evaluated by the \emph{square-and-multiply} technique in $O(\log N)$ operations. (Note that for
$P_N(x)$ such a fast algorithm needs to perform division.) By contrast, a
family $(F_n(x))_n$ of univariate polynomials is called \emph{hard to compute}
if for large enough~$N$, the complexity of the evaluation of {$F_N$} grows at least like a power in
{$\deg(F_N)$}, whatever the algorithm used.

Paterson and Stockmeyer~\cite{PaSt71,PaSc73} proved the existence of polynomials in $\Q[x]$ which are hard to compute (note that this does not follow from Pan's result~\cite{Pan66}). 
However, their proof was based on a non-constructive argument.
Specific families of hard-to-compute
polynomials were first exhibited by Strassen~\cite{Strassen74}. 
For instance, he proved that for large $N$, the polynomial
$\sum_{\ell=0}^N 2^{2^\ell} x^\ell$ needs at least $\sqrt{N/(3 \log N)}$
operations to be evaluated.
The techniques
were refined and improved by Borodin and Cook~\cite{BoCo76},
Lipton~\cite{Lipton78} and Schnorr~\cite{Schnorr78}, who produced explicit
examples of {degree-$N$} polynomials whose evaluation requires a number of
operations linear in~$\sqrt{N}$. Subsequently, various methods have been
developed to produce similar results on \emph{lower bounds}, \emph{e.g.}, by
Heintz and Sieveking~\cite{HeSi80} using algebraic geometry, and by Aldaz et
al.~\cite{AMMP01} using a combinatorial approach. The topic is vast and very
well summarized in the book by B\"{u}rgisser, Clausen and
Shokrollahi~\cite{BCS97}.

In this article, we focus on \emph{upper bounds}, that is on the design of
fast algorithms for special families of polynomials, which are hard to
compute, but easier to evaluate than generic polynomials. For instance, for
the degree-$\binom{N}{2}$~polynomial \[Q_N(x) \coloneqq P_1(x) \cdots P_N(x),\] a
complexity in $O(N)$ is clearly achievable. We will see in~\S\ref{sec:DeFeo}
that one can do better, and attain a cost which is almost linear in~$\sqrt{N}$
(up to logarithmic factors in~$N$). Another striking example is \[R_N(x)\coloneqq\sum_{\ell=0}^N
x^{\ell^2},\] of degree~$N^2$, and whose evaluation can also be performed in
complexity quasi-linear in~$\sqrt{N}$, as shown recently by Nogneng and
Schost~\cite{NognengSchost18} (see~\S\ref{sec:NoSc}). In both cases, these
complexities are obtained by clever although somehow ad-hoc algorithms. The
starting point of our work was the question whether these algorithms for
$Q_N(x)$ and $R_N(x)$ could be treated in a unified way, which would allow
to evaluate other families of polynomials in a similar complexity.

\smallskip The answer to this question turns out to be positive. The key idea, very
simple and natural, is to view both examples as particular cases of the
following general question: 

\begin{quote} Given a $q$-holonomic sequence, that is, a
sequence satisfying a linear recurrence with polynomial coefficients in $q$
and~$q^n$, how fast can one compute its $N$-th term?
\end{quote}

In the more classical case of holonomic sequences (satisfying {linear}
recurrences with polynomial coefficients in the index~$n$), fast algorithms
exist for the computation of the $N$-th term. They rely on a basic block,
which is the computation of the factorial term~$N!$ in arithmetic complexity
quasi-linear in~$\sqrt{N}$, using an algorithm due to
Strassen~\cite{Strassen77}. The Chudnovsky brothers extended
in~\cite{ChuChu88} Strassen's algorithm to the computation of the $N$-th term
of any holonomic sequence in arithmetic complexity quasi-linear in~$\sqrt{N}$.

Our main contribution in this article consists in transferring these results to the
$q$-holonomic framework. It turns out that the resulting algorithms are
actually simpler in the $q$-holonomic case than in the usual holonomic
setting, essentially because multipoint evaluation on arithmetic progressions
used as a subroutine in Strassen's and Chudnovskys' algorithms is replaced by
multipoint evaluation on geometric progressions, which is considerably
simpler~\cite{BoSc05}. 

A consequence of our results is that the following apparently unrelated
polynomials and rational functions can be evaluated fast (note the change in
notation, with {the} variable $x$ denoted now by~$q$):

\begin{itemize}

\item $A_n(q)$, the generating function of the number of partitions into $n$
positive integers each occurring \emph{at most twice}~\cite{Yang90},
\emph{i.e.}, the coefficient of ${t}^n$ in the product 
\[\prod_{k\geq 1}(1+q^k
{t}+q^{2k}{t}^2).\]

\item $B_n(q)\coloneqq\prod_{i=1}^\infty (1-q^i) \bmod q^n$; by Euler's pentagonal
theorem~\cite[\S5]{Pak06}, \[B_n(q)=\displaystyle{1+\!\! \sum _{i(3i+1)<2n}
(-1)^i \left({q}^{\frac{i \left( 3\,i- 1 \right)}{2} }+ {q}^{\frac{i \left(
3\,i+1 \right)}{2} } \right)}.\]

\item The number $C_{n}(q)$ of $2n \times 2n$ upper-triangular matrices
over~$\F_q$ (the finite field with $q$ elements), whose square is the zero
matrix~\cite{KiMe97}; by \cite{EZ96}, $C_{n}(q)$ is equal to
\[
C_{n}(q)=\sum_{j} \left [\binom{2n}{n-3j}-  
             \binom{2n}{n-3j-1}  \right] \cdot q^{n^2-3j^2-j} .
\]

\end{itemize}

The common feature, exploited by the new algorithm, is that the sequences
$(A_n(q))_{n \geq 0}$, $(B_n(q))_{n \geq 0}$, $(C_n(q))_{n \geq 0}$ are all
$q$-holonomic. Actually, $q$-holonomic sequences are ubiquitous, so the range
of application of our results is quite broad. This stems from the fact that
they are coefficient sequences of power series satisfying $q$-difference
equations, or equivalently, $q$-shift (or, $q$-differential) equations. From
that perspective, our topic becomes intimately connected with $q$-calculus.
The roots of $q$-calculus are in works of famous mathematicians such as Rothe \cite{Rothe1793},
Gauss \cite{Gauss1808} and Heine \cite{Heine1847}. The topic gained renewed interest in the first half of the
20th century, with the work, both on the formal and analytic aspects, of
Tanner~\cite{Tanner1895},
Jackson~\cite{Jackson1909,Jackson1910b,Jackson1910},
Carmichael~\cite{Carmichael1912}, Mason~\cite{Mason1915},
Adams~\cite{Adams1928,Adams1931}, Trjitzinsky~\cite{Trjitzinsky1933},
Le~Caine~\cite{LeCaine1943} and Hahn~\cite{Hahn50}, to
name just a few. Modern accounts of the various aspects of the theory
(including historical ones) can be found in~\cite{DRSZ03,KoKeSw10,Ernst12}.

One of the reasons for interest in $q$-differential equations is that, formally,
as $q$ tends to $1$, the $q$-derivative $\frac{f(qx)-f(x)}{(q-1)x}$ tends to
$f'(x)$, thus to every differential equation corresponds a $q$-differential
equation which goes formally to the differential equation as $q\rightarrow 1$.
In nice cases, (some of) the solutions of the $q$-difference equation go to
solutions of the associated differential equation as $q\rightarrow 1$. An
early example of such a good deformation behavior is given by the basic
hypergeometric equation of Heine~\cite{Heine1847}, see also~\cite[\S1.10]{KoKeSw10}.

In computer algebra, $q$-holonomic sequences were considered starting from the
early nineties, in the context of computer-generated proofs of identities in
the seminal paper by Wilf and Zeilberger~\cite{WZ92}, notably in Section~5
(``Generalization to $q$-sums and $q$-multisums'') and in Section~6.4
(``$q$-sums and integrals''). Creative telescoping algorithms for (proper)
$q$-hypergeometric sequences are discussed in various
references~\cite{PWZ96,BoKo99,Cartier92}; several implementations of those
algorithms are described for instance
in~\cite{PauleRiese97,Riese03,KaKo09,SprengerKoepf12}.
Algorithms for computing polynomial, rational and {$q$}-hypergeometric
solutions of {$q$}-difference equations were designed by Abramov and
collaborators~\cite{Abramov95,AbBrPe95,APP98,Khmelnov00}.
These algorithms are important for several reasons. One is that they lie at
the heart of the vast generalization by Chyzak~\cite{Chyzak98,Chyzak00} of the
Wilf and Zeilberger algorithmic theory, for the treatment of general
$q$-holonomic (not only $q$-hypergeometric) symbolic summation and integration
via creative telescoping. In that context, a multivariate notion of
$q$-holonomy is needed; the foundations of the theory were laid by
Zeilberger~\cite{Zeilberger90} and Sabbah~\cite{Sabbah93} (in the language of
D-modules), see also~\cite[\S~2.5]{Cartier92} and~\cite{GaLe16}.

The simplest non-trivial holonomic sequence is $(n!)_{n\geq0}$, whose $n$-th term
combinatorially
counts the number of permutations of $n$ objects. If instead of direct
counting, one assigns to every permutation $\pi$ its number of inversions
$\inv(\pi)$, \emph{i.e.}, the number of pairs $1\leq i<j\leq n$ with
$\pi(i)>\pi(j)$, the refined count (by size and number of inversions) is
\[[n]_q!\coloneqq(1+q) (1+{q}+q^2) \cdots (1+q+\cdots +q^{n-1}).\] This is the
$q$-analogue of $n!$; it is the simplest non-trivial $q$-holonomic sequence.

There is also a natural $q$-analog of the binomial coefficients, called the
\emph{Gaussian coefficients}, defined by \[\binom{n}{k}_q\coloneqq\frac{[n]_q!}{[k]_q!
[n-k]_q!}.\] They have many counting interpretations, \emph{e.g.,} they count
the $k$-dimensional subspaces of $\F_q^n$ (points on Grassmannians
over~$\F_q$). There are $q$-analogs to (almost) everything. To select just two more basic examples, the $q$-analog~\cite[Thm.~3.3]{Andrews1976} of the binomial
theorem is given by
\begin{equation}\label{eq:BinomThm}
\prod_{k=1}^n (1+q^{k-1}x) = \sum_{k=0}^n \binom{n}{k}_q q^{\binom{k}{2}} x^k,
\end{equation}
and the
$q$-version~\cite[Thm.~3.4]{Andrews1976} of the Chu-Vandermonde identity is
\begin{equation}\label{eq:VandChu}
 \sum_{k=0}^n q^{k^2}
\binom{m}{k} _q \binom{n}{k} _q= \binom{m+n}{n}_q.
\end{equation}

The ubiquity of $q$-holonomic sequences is manifest in plenty of fields:
partition theory~\cite{Shanks51,Andrews1976,Andrews86,Yang90,Pak06,Liu17} 
and other subfields of combinatorics~\cite{FuHo85,BousquetMelou92,Kirillov95,EZ96,KiMe97,Andrews10,Yip18}; theta functions and modular forms~\cite{Bellman61,Zagier08,LTh16,Labrande18,Garvan19}; 
special functions~\cite{Borwein88,Ismail01,KoKeSw10} 
and in particular orthogonal
polynomials~\cite{Koornwinder93}; algebraic geometry~\cite{EkGe15},
representation theory~\cite{Hua00}; knot
theory~\cite{GaLe05,GaKo12,GaKo13,GaLe16,GaLaLe18}; 
Galois
theory~\cite{Hendriks97}; number theory~\cite{Osgood1971,Vizio02,AdBeDeJo17}.

\smallskip 
\begin{quote}
The main messages of this article are that for any example of a
$q$-holonomic sequence occurring in those various fields, \emph{one can compute selected coefficients faster than by a direct algorithm} and that \emph{this fact finds a tremendous number of applications}.
\end{quote}

\medskip\noindent{\bf Complexity basics.} We estimate the arithmetic complexities of
algorithms by counting arithmetic operations $(+, -, \times, {\div})$ in the
base field~$\K$ at unit cost. We use standard complexity notation, such as
$\M(d)$ for the cost of degree-$d$ multiplication in $\K[x]$, and~$\theta$ for feasible exponents of matrix multiplication. The best currently known upper bound {is} $\theta < 2.3729$~\cite{LeGall14,AlWi20}. As usual, $O(\cdot)$ stands for the big-Oh notation and $\tilde{O}(\cdot)$ is used to hide polylogarithmic factors in the argument. Most arithmetic operations on univariate
polynomials of degree $d$ in $\K[x]$ can be performed in quasi-linear
complexity~$\tilde{O}(d)$: multiplication, shift, interpolation, gcd,
resultant, \emph{etc}. A key feature of these
results is the reduction to fast polynomial multiplication, which can be
performed in time $\M(d) = O( d \log d \log \log d)$
~\cite{Schoenhage77,CaKa91}. Finally, the arithmetic cost of multiplication of polynomial matrices of size~$n$ and degree~$d$ is denoted by $\M\M(n,d)$ and we have $\M\M(n,d) = O(n^\theta d+n^2 \M(d)) = \tilde{O}(n^\theta d)$~\cite{BoSc05}. An excellent general reference for these questions is {the book by von zur Gathen and Gerhard}~\cite{GaGe13}.  \\

A short version of this article has appeared at the ISSAC'20 conference~\cite{Bostan20}. In the present version, we included the proofs of Theorems~\ref{thm:qNth-several} and~\ref{thm:qNth-bin}, we added a new Theorem~\ref{thm:detailed_complexity} containing a detailed complexity analysis of the main algorithm (\textsf{Algorithm~3}) with respect to all parameters, and we displayed pseudo-code for the algorithms as well as figures visualizing their performance. We also elaborated on a task which was mentioned as future work in the previous version, namely the application of our methods to the computation of curvatures of $q$-difference equations, see \S\ref{sec:q-diff_curv}.\\

The structure of the article is as follows: in Section~\ref{sec:motiv} we deal with the tasks of evaluating $Q_N(x)$ and $R_N(x)$. We show that these are two instances of the same problem and provide \textsf{Algorithm~3} which solves both in $O(\M(\sqrt{N}))$ arithmetic complexity. Section~\ref{sec:main} is devoted to the main results; we prove there that \textsf{Algorithm~3} can be used for computing terms of any $q$-holonomic sequence with the same cost, and provide extensions and more insight. In the same section we also consider the bit-complexity model. We identify and elaborate on several applications for our result in Section~\ref{sec:app}. In Section~\ref{sec:exp} we report on
implementations of our algorithms, which deliver encouraging timings, and we finally describe future tasks and investigation fields in Section~\ref{sec:conclusion}.

\section{Two motivating examples}\label{sec:motiv}

Before presenting our main results in Section~\ref{sec:main}, we describe in
this section the approach and main ideas on two basic examples. Both examples
concern the fast evaluation of special families of univariate polynomials.
In~\S\ref{sec:DeFeo}, we consider polynomials of the form $\prod_\ell (x -
q^\ell)$, and in~\S\ref{sec:NoSc} sparse polynomials of the form $\sum_{\ell}
p^\ell x^{a \ell^2+b \ell}$. In both cases, we first present fast ad-hoc
algorithms, then introduce equally fast alternative algorithms, which have the
nice feature that they will be generalizable to a broader setting.

\subsection{Evaluation of some structured polynomials}\label{sec:DeFeo}

Here is our first example, that emerged from a question asked to the first author by
Luca De Feo (private email communication, 10 January 2020); this
was the starting point of the article.

\begin{quote}
    Let $q$ be an element of the field $\K$, and consider the polynomial
\begin{equation} \label{eq:defeo}
F(x) \coloneqq \prod_{i=0}^{N-1} (x - q^i) \; \in \K[x].
\end{equation}
Given another element $\alpha\in\K$, how fast can one evaluate $F(\alpha)$?
\end{quote}

{If $q=0$, then $F(\alpha)=\alpha^N$ can be computed in $O(\log N)$ operations in $\K$, by binary powering. We assume in what follows that~$q$ is nonzero.}
Obviously, a direct algorithm consists in computing the successive powers $q,
q^2, \ldots, q^{N-1}$ using $O(N)$ operations in $\K$, then computing the
elements $\alpha-1, \alpha-q, \ldots, \alpha-q^{N-1}$ in $O(N)$ more
operations in $\K$, and finally returning their product. The total arithmetic
cost of this algorithm\footnote{If $q^n = 1$ for some $n<N$, then it is enough to compute the product of $\alpha-q^i$ for $i=0,\dots,n-1$ and its appropriate power. The latter step can be done efficiently (in essentially $\log(N)$ operations) using binary powering. Our main interest lies therefore in $q\in \K$ that are not roots of unity of small order.} is $O(N)$, linear in the degree of $F$.

\medskip 
Is it possible to do better?
The answer is positive, as one can use the following \emph{baby-step/giant-step} strategy, in which, in order to simplify things,
we assume that $N$ is a perfect square, 
$N=s^2$.\\

\medskip \noindent \underline{\textsf{Algorithm 1}}\label{algo:1}
\begin{enumerate}
  \item (Baby-step) Compute the values of $q, q^2, \ldots, q^{s-1}$, and deduce
the coefficients of the polynomial \[G(x)\coloneqq\prod_{j=0}^{s-1} (x - q^j).\]
  \item (Giant-step) Compute $Q\coloneqq q^s, Q^2, \ldots, Q^{s-1}$, and deduce
the coefficients of the polynomial \[H(x)\coloneqq\prod_{k=0}^{s-1} (\alpha - Q^k \cdot x).\]
  \item Return the resultant $\textrm{Res}(G,H)$.
\end{enumerate}
By the basic property of resultants, the output of this algorithm is
\[\textrm{Res}(G,H) \! = \! \prod_{j=0}^{s-1} H(q^j) = \prod_{j=0}^{s-1}
\prod_{k=0}^{s-1} \left(\alpha - q^{sk + j} \right) = 
\prod_{i=0}^{N-1} (\alpha - q^i) =
F(\alpha).\] 
Using the fast subproduct tree algorithm~\cite[Algorithm~10.3]{GaGe13}, one
can perform the baby-step~(1) as well as the giant-step~(2) in $O(\M(\sqrt{N})
\log N)$ operations in $\K$, and by~\cite[Corollary~11.19]{GaGe13} the same
cost can be achieved for the resultant computation in step~(3). Using fast
polynomial multiplication, we conclude that $F(\alpha)$ can be computed in
arithmetic complexity quasi-linear in~$\sqrt{N}$.

{Note that if $N$ is not a perfect square, then  one can compute $F(\alpha)$ as $F(\alpha) = F_1(\alpha)  F_2(\alpha)$, where
$F_1(\alpha)\coloneqq \prod_{i=0}^{{\lfloor \sqrt{N} \rfloor}^2-1} (\alpha - q^i)$
is computed as in {\textsf{Algorithm 1}}, 
while 
$F_2(\alpha)\coloneqq \prod_{i = {\lfloor \sqrt{N} \rfloor}^2}^{N-1} (\alpha - q^i)$
can be computed naively, since $N - {\lfloor \sqrt{N} \rfloor}^2 = O(\sqrt{N})$.
}

\smallskip It is possible to speed up the previous algorithm by a logarithmic factor
in~$N$ using a slightly different scheme, still based on a \emph{baby-step/giant-step} strategy, but exploiting the fact that the roots of~$F$ are in
geometric progression. Again, we assume that $N=s^2$ is a perfect square.
This alternative algorithm goes as follows.
Note that it is very close in spirit to Pollard's 
algorithm described on page 523 of~\cite{Pollard}.

\bigskip \noindent \underline{\textsf{Algorithm 2}}\label{algo:2}
\begin{enumerate}
  \item (Baby-step) Compute $q, q^2, \ldots, q^{s-1}$, and deduce
the coefficients of the polynomial $P(x)\coloneqq\prod_{j=0}^{s-1} (\alpha - q^j \cdot x)$.
  \item (Giant-step) 
Compute $Q\coloneqq q^s, Q^2, \ldots, Q^{s-1}$, and 
evaluate $P$ {simultaneously} at $1, Q, \ldots, Q^{s-1}$.
  \item Return the product $ P(Q^{s-1}) \cdots P(Q) P(1)$.
\end{enumerate}
Obviously, the output of this algorithm is 
\[ \prod_{k=0}^{s-1} P(Q^k) = 
\prod_{k=0}^{s-1} \prod_{j=0}^{s-1} (\alpha - q^j \cdot q^{sk}) = 
\prod_{i=0}^{N-1} (\alpha - q^i) =
F(\alpha).
\]
As pointed out in the {remarks} after the proof of~\cite[Lemma~1]{BoSc05}, one
can compute $P(x)=P_s(x) = \prod_{j=0}^{s-1} (\alpha - q^j \cdot x)$ in
step~(1) without computing the subproduct tree, by using a divide-and-conquer
scheme which exploits the fact that $P_{2t}(x) = P_t(q^t x) \cdot P_t(x)$
and $P_{2t+1}(x) =   (\alpha-q^{2t}x) \cdot P_t(q^t x) \cdot P_t(x)$. The cost
of this algorithm is $O(\M(\sqrt{N}))$ operations in $\K$.

As for step~(2), one can use the fast \emph{chirp transform} algorithms of Rabiner,
Schafer and Rader~\cite{RaScRa69} and of Bluestein~\cite{Bluestein70}. These
algorithms rely on the following observation: writing $Q^{ij} =
Q^{\binom{i+j}{2}} \cdot Q^{-\binom{i}{2}} \cdot Q^{-\binom{j}{2}}$ and
$P(x)=\sum_{j=0}^{{s}} c_j x^j$ implies that the needed values ${P(Q^i) =
\sum_{j=0}^{{s}} c_j Q ^{ij}, 0\leq i < s}$, are
\[P(Q^i) = Q^{-\binom{i}{2}} \cdot {\sum_{j=0}^{{s}} 
c_j Q^{-\binom{j}{2}} \cdot Q^{\binom{i+j}{2}}}, \;  0\leq i < s,
\]
in which the sum is simply the coefficient of $x^{{s}+i}$ in the product
\[{ \left( \sum_{{j}=0}^{{s}} c_{{j}} Q^{-\binom{{j}}{2}}  x^{{s}-{j}} \right) \left( 
\sum_{\ell=0}^{{2s}}  Q^{\binom{\ell}{2}} x^{\ell} \right) }.\]
This polynomial product can be computed in $2 \, \M(s)$ operations (and even in
$\M(s) + O(s)$ using the \emph{transposition principle}~\cite{HQZ04,BoLeSc03}, since
only the median coefficients $x^{{s}}, \ldots, x^{{2s-1}}$ are actually needed).
In conclusion, step~(2) can also be performed in $O(\M(\sqrt{N}))$ operations
in $\K$, and thus $O(\M(\sqrt{N}))$ is the total cost of this second
algorithm.

\smallskip We have chosen to detail this second algorithm for several reasons: not only
because it is faster by a factor $\log(N)$ compared to the first one, but more
importantly because it has a {simpler} structure, which will be
generalizable to {the} general $q$-holonomic setting. In fact, we do not provide a pseudo-code implementation for this algorithm, since we will do so for the more general case (Algorithm~\ref{alg:algo3step2}).

\subsection{Evaluation of some sparse polynomials} \label{sec:NoSc}

Let us now consider the sequence of sparse polynomial sums
\[
v^{(p,a,b)}_N(q) = \sum_{n=0}^{N-1} p^n q^{an^2+bn},
\]
where $p \in \K$ and $a,b\in\mathbb{Q}$ such that $2a, a+b$ are both integers.
Typical examples are (truncated) modular forms~\cite{PaRa18}, which are ubiquitous in complex analysis~\cite{Bellman61},
number theory~\cite{Zagier08} and  combinatorics~\cite{Andrews1976}.
For instance, the \emph{Jacobi theta function} $\vartheta_3$  
depends on two complex variables $z\in\C$, and $\tau\in\C$ with $\Im 
(\tau)>0$, and it is defined by
\begin{equation*}
\vartheta_3(z; \tau) 
 = \sum_{n=-\infty}^\infty e^{\pi i (n^2 \tau + 2 n z)} 
= 1 + 2 \sum_{n=1}^\infty \eta^n q^{n^2}, 
\end{equation*}
where $q = e^{\pi i \tau}$ is the nome ($|q|<1$) and $\eta = e^{2\pi i z}$.
Here, $\K=\C$.
Another example is the \emph{Dedekind eta function}, appearing in Euler's famous \emph{pentagonal theorem}~\cite[\S5]{Pak06}, which has a similar form
\[
q^{\frac{1}{24}} \cdot \left( 1+\sum _{n=1}^{\infty} (-1)^n
\left(
{q}^{\frac{n \left( 3\,n-
1 \right)}{2} }+  {q}^{\frac{n \left( 3\,n+1 \right)}{2} } \right) \right),
\quad \text{with} \; q = e^{2 \pi i \tau}.
\]
Moreover, sums of the form $v^{(1,a,b)}_N(q) = \sum_{n=0}^{N-1} q^{a n^2+bn}$,
over $\K={\Q}$ or $\K=\F_2$,
crucially occur in a recent algorithm by Tao, Crott and Helfgott~\cite{TCH12}
for the efficient construction of prime numbers in given intervals, \emph{e.g.}, in
the context of effective versions of Bertrand's postulate.
Actually, (the proof of) Lemma 3.1 in~\cite{TCH12} contains the first
sublinear complexity result for the evaluation of the sum
$v^{({p},a,b)}_N(q)$ at an arbitrary point~$q$; namely, the cost is
$O(N^{\theta/3})$, where $\theta \in [2,3]$ is any feasible exponent for matrix
multiplication. Subsequently, Nogneng and Schost~\cite{NognengSchost18}
designed a faster algorithm, and lowered the cost down to
$\tilde{O}(\sqrt{N})$. Our algorithm is similar in spirit to theirs, as it
also relies on a \emph{baby-step/giant-step} strategy.

Let us first recall the principle of the Nogneng-Schost
algorithm~\cite{NognengSchost18}. 
Assume as before that $N$ is a perfect square, $N=s^2$.
The starting point is the remark that
\[
v^{(p,a,b)}_N(q) 
= \sum_{{n}=0}^{N-1} p^{{n}} q^{a {n}^2+b {n}}
= \sum_{k=0}^{s-1}\sum_{j=0}^{s-1} p^{j+sk} q^{a (j+sk)^2+b(j+sk)}
\]
can be written 
\[
\sum_{k=0}^{s-1}p^{sk} q^{a s^2k^2+bsk} \cdot P(q^{2ask}), \;
\text{where} \; P(y) \coloneqq \sum_{j=0}^{s-1} p^j q^{aj^2+bj} y^j.
\]
Therefore, the computation of $v^{(p,a,b)}_N(q)$ can be reduced essentially to
the {simultaneous} evaluation of the
polynomial $P$ at $s=1+\deg(P)$
points (in geometric progression), with arithmetic cost $O(\M(\sqrt{N}))$.

We now describe an alternative algorithm, of {similar complexity}
$O(\M(\sqrt{N}))$, with a {slightly} larger constant in the big-Oh
estimate, but whose advantage is its potential of generality.

Let us denote by $u_n(q)$ the summand $p^n q^{an^2+bn}$. Clearly, the sequence 
$\left(u_n(q) \right)_{n \geq 0}$ 
satisfies the recurrence relation
\[
u_{n+1}(q) = A(q,q^n) \cdot u_n(q), \quad \text{where} \quad  A(x,y) \coloneqq p x^{a+b} y^{2a}.
\]
As an immediate consequence, the sequence with general term  $v_n(q) 
\coloneqq \sum_{k=0}^{n-1} u_k(q)$ 
satisfies a similar recurrence relation
\begin{equation}\label{eq:v}
v_{n+2}(q) - v_{n+1}(q) = A(q,q^n) \cdot (v_{n+1}(q) - v_{n}(q)),
\end{equation}
with initial conditions 
$v_0(q) = 0$ and  $v_1(q) = 1$.
This scalar recurrence of order two is equivalent to the first-order matrix recurrence
\[
\begin{bmatrix}
v_{n+2} \\ v_{n+1}
\end{bmatrix}	
=
\begin{bmatrix}
A(q,q^n) + 1 & - A(q,q^n)\\ 1 & 0
\end{bmatrix}
\times 
\begin{bmatrix}
v_{n+1} \\ v_{n}
\end{bmatrix}.
\]
By unrolling this matrix recurrence, we deduce that
\[
\begin{bmatrix}
v_{n+1} \\ v_{n}
\end{bmatrix}	
=
M(q^{n-1}) 
\begin{bmatrix}
v_{n} \\ v_{n-1}
\end{bmatrix}
=
M(q^{n-1}) \cdots M(q) M(1)
\times
\begin{bmatrix}
1 \\ 0
\end{bmatrix},
\]
where
\[
M(x)
\coloneqq
\begin{bmatrix}
p q^{a+b} x^{2a}+ 1 & \! \! - p q^{a+b} x^{2a}\\ 1 & \!\! 0
\end{bmatrix}
,	
\]
{hence $ v_N = 
\begin{bmatrix}
0 & 1
\end{bmatrix}	
\times
M(q^{N-1}) \cdots M(q) M(1)
\times
\begin{bmatrix}
1 \\ 0
\end{bmatrix}.
$}
Therefore, the computation of $v_N$ reduces to the computation of
{the} ``matrix $q$-factorial'' $M(q^{N-1}) \cdots M(q) M(1)$, which can be
performed fast by using a \emph{baby-step/giant-step} strategy similar to
the one of the second algorithm in~\S\ref{sec:DeFeo}.
Again, we assume for simplicity that $N=s^2$ is a perfect square.
The algorithm goes as follows.

\bigskip \noindent \underline{\textsf{Algorithm 3}}
\, {(matrix $q$-factorial)} \label{algo:3}
\begin{enumerate}
  \item[(1)] (Baby-step) Compute $q, q^2, \ldots, q^{s-1}$; deduce
the coefficients of the {polynomial matrix} $P(x)\coloneqq M(q^{s-1} x) \cdots M(q x) M(x)$.
  \item[(2)] (Giant-step) 
Compute $Q\coloneqq q^s, Q^2, \ldots, Q^{s-1}$, and 
evaluate (the entries of) $P(x)$ {simultaneously} at $1, Q, \ldots, Q^{s-1}$.
  \item[(3)] Return the product $P(Q^{s-1}) \cdots P(Q) P(1)$.
\end{enumerate}
Clearly, this algorithm generalizes \textsf{Algorithm~2} in~\S\ref{sec:DeFeo} and, as promised, we also provide a detailed pseudo-code implementation: \hyperref[alg:algo3step1]{{\sf Step1}} and \hyperref[alg:algo3step2]{{\sf Step2 \& Step3}}:
\setcounter{algorithm}{2}
\begin{algorithm}[H]
\begin{algorithmic}[1]
\State $q_s \gets [q,q^2,\dots,q^{s-1}]$ 
\State $t \gets s$ 
\Function{$\mathcal{BS}$}{$t$}
\If{$t=1$}
\State \Return $M(x)$
\EndIf
\If{$t$ is even}
\State $p_1(x) \gets \mathcal{BS}(t/2)$
\State $p_2(x) \gets p_1(q^{t/2}x)$
\Comment{Using $q_s$}
\State \Return $p_2(x) \cdot p_1(x)$
\Comment{Fast polynomial multiplication}
\Else
\State $p_1(x) \gets \mathcal{BS}((t-1)/2)$
\State $p_2(x) \gets p_1(q^{(t-1)/2}x)$
\Comment{Using $q_s$}
\State $p_3(x) \gets M(q^{t-1} x)$
\Comment{Using $q_s$}
\State \Return $p_3(x) \cdot p_2(x) \cdot p_1(x)$
\Comment{Fast polynomial multiplication}
\EndIf
\EndFunction
\end{algorithmic}
\caption{(\textcolor{red}{\textsf{Step1}}) \quad \\ {\bf Input}: $s,q,M(x)$ \qquad \\ {\bf Output}: $M(q^{s-1}x)\cdots M(qx)M(x)$}
\label{alg:algo3step1}
\end{algorithm}  
\setcounter{algorithm}{2}
\begin{algorithm}[H]
\textbf{Assumptions:} $Q \neq 0$, $P(x)$ polynomial matrix of size $n \times n$ and degree $d \geq s$.
\begin{algorithmic}[1]
\State $Q_d \gets [Q,Q^2,\dots,Q^{d-1}]$ 
\State $Q' \gets 1/Q$
\State $Q'_d \gets [Q^{-\binom{d}{2}},\dots, Q^{-\binom{1}{2}},Q^{\binom{0}{2}},\dots,Q^{\binom{2d}{2}}]$
\Comment{Using $Q_d$ and $Q'$}
\State $P_{s-1},\dots, P_0$ 
\Comment{Empty $n\times n$ matrices}
\For{$i$ from 1 to $n$}
\For{$j$ from 1 to $n$}
\State $p(x) \gets P(x)_{i,j}$  
\Comment{$p(x) = c_0+c_1x+\cdots+c_dx^d$}
\State $p_1(x) \gets \sum_{\ell=0}^d c_{\ell}  Q^{-\binom{\ell}{2}} x^{d-\ell}$
\Comment{Using $Q'_d$}
\State $p_2(x) \gets \sum_{\ell=0}^{2d} Q^{\binom{\ell}{2}} x^{\ell}$
\Comment{Using $Q'_d$}
\State  $p_3(x) \gets P_1(x)\cdot P_2(x)$
\Comment{$p_3(x) = \sum_{\ell=0}^{3d} r_\ell x^\ell$; fast multiplication}
\For{$k$ from 0 to $s-1$}
\State $(P_k)_{i,j} \gets r_{d+k}$ 
\Comment{$P_\ell = P(Q^\ell)$ for $\ell = 0,\dots,s-1$}
\EndFor
\EndFor
\EndFor
\State $P \gets 1$
\For{$k$ from $0$ to $s-1$}
    \State $P \gets P_k \cdot P$
\EndFor
\Return $P$
\end{algorithmic}
\caption{(\textcolor{red}{\textsf{Step2 \& Step3}}) \quad \\ {\bf Input}: $s, Q, P(x)$ \qquad \\ {\bf Output}: $P(Q^{s-1})\cdots P(1)$}
\label{alg:algo3step2}
\end{algorithm}  

By the same observations as in \textsf{Algorithm~2} in~\S\ref{sec:DeFeo}, {the complexity of} \textsf{Algorithm 3}
already is quasi-linear in~$\sqrt{N}$. In the next section we will discuss the complexity not only with respect to $N$, but to the matrix size and degree as well.

We remark that when applied to the computation of $v^{(p,a,b)}_N(q)$, the dependence
in~$a,b$ of \textsf{Algorithm~3} is quite high (quasi-linear in~$a$ and~$b$). If $a$ and $b$ are fixed
and considered {as}~$O(1)$ this dependence is invisible, but otherwise the
following variant has the same complexity with respect to~$N$, {and} a much
better cost with respect to~$a$ and~$b$. It is based on the simple observation
that, if $\tilde{M}(x)$ denotes the polynomial matrix
\begin{equation}\label{eq:tildeM}
\tilde{M}(x)
\coloneqq
\begin{bmatrix}
p r x + 1 & \! \! - p r x\\ 1 & \!\! 0
\end{bmatrix},
\;
\text{with}
\;
r \coloneqq q^{a+b},
\end{equation}
and if $\tilde{q} \coloneqq q^{2a}$, then the following matrix $q$-factorials coincide:
\[
M(q^{N-1}) \cdots M(q) M(1)
=
\tilde{M}(\tilde{q}^{N-1}) \cdots {\tilde{M}}(\tilde{q}) {\tilde{M}}(1)
.\]

\bigskip \noindent \underline{\textsf{Algorithm 4}}
\, {(matrix $q$-factorial, variant)} \label{algo:4}
\begin{enumerate}
  \item[(0)] (Precomputation) Compute $r\coloneqq q^{a+b}$, $\tilde{q}\coloneqq q^{2a}$, and $\tilde{M}$ in~\eqref{eq:tildeM}.
  \item[(1)] (Baby-step) Compute $\tilde{q}, \tilde{q}^2, \ldots, \tilde{q}^{s-1}$; deduce the coefficients of the {polynomial matrix}  \[\tilde{P}(x)\coloneqq\tilde{M}(\tilde{q}^{s-1} x) \cdots \tilde{M}(\tilde{q} x) \tilde{M}(x).\]
  \item[(2)] (Giant-step) 
Compute $\tilde{Q}\coloneqq\tilde{q}^s, \tilde{Q}^2, \ldots, \tilde{Q}^{s-1}$, and 
evaluate (the entries of) $\tilde{P}(x)$ {simultaneously} at $1, \tilde{Q}, \ldots, \tilde{Q}^{s-1}$.
  \item[(3)] Return the product $\tilde{P}(\tilde{Q}^{s-1}) \cdots \tilde{P}(\tilde{Q}) \tilde{P}(1)$.
\end{enumerate}

Using binary powering, the cost of the additional precomputation in step~(0)
is only logarithmic in $a$ and $b$. In exchange, the new steps~(2) and~(3) are
performed on matrices whose degrees do not depend on $a$ and~$b$ anymore (in
the previous, unoptimized, version the degrees of the polynomial matrices were linear in~$a$ and~$b$). The total arithmetic cost with respect to~$N$ is still quasi-linear in~$\sqrt{N}$.\\

In the next section, we will show that \textsf{Algorithm~3} can be employed for the fast computation of the $N$-th term of \emph{any} $q$-holonomic sequence. Note that the trick in \textsf{Algorithm~4} relies on the fact that $M(x)$, coming from the recurrence for $v_N^{(p,a,b)}(q)$, contains only pure powers of $x$ and $q$. We cannot hope for this phenomenon in general, however we advise to bear this simplification in mind for some practical purposes. In any case, we can improve on the quasi-linear cost in the degree $d$ of the polynomial matrix $M(x)$ in \textsf{Algorithm~3}, obtaining a complexity of essentially $\sqrt{d}$; in essence, the idea consists in choosing $s=\sqrt{N/d}$ rather than $\sqrt{N}$, see \S\ref{sec:qNth_complexity}.

\section{Main results}\label{sec:main}

In this section, we generalize the algorithms from~\S\ref{sec:motiv},
and show that they apply to the general setting of $q$-holonomic sequences.

\subsection{Preliminaries}\label{sec:prelim}

A sequence $u_n = u_n(q)$ is $q$-holonomic if it satisfies a nontrivial $q$-recurrence, that is, a linear recurrence with coefficients given by polynomials in $q$ and~$q^n$. 

\begin{definition}[$q$-holonomic sequence] \label{def:qhol}
Let $\K$ be a field, and $q\in\K$.
A sequence $(u_n(q))_{n \geq 0}$ in $\K^{\N}$
is called \emph{$q$-holonomic} if there exist
$r\in \N$ and polynomials $c_0(x,y), \ldots, c_r(x,y)$ in $\K[x,y]$, with $c_r(x,y)\neq 0$, such
that
\begin{equation}\label{eq:qrec}
c_r(q, q^n) u_{n + r}(q) + \cdots  + c_0(q, q^n) u_n(q) = 0, 
\quad \text{for  all } \; n\geq 0.
\end{equation}
The integer $r$ is called the \emph{order} of the
$q$-recurrence~\eqref{eq:qrec}. When $r=1$, we say that $(u_n(q))_{n \geq 0}$
is $q$-hypergeometric. \end{definition}

The most basic examples are the \emph{$q$-bracket} and the 
\emph{$q$-factorial}, 
\begin{equation}\label{def:qbracket}
{[n]_q \coloneqq 1 + q + \cdots + q^{n-1}}
\quad \text{and}\quad [n]_q! \coloneqq \prod_{k=1}^n [k]_q.
\end{equation}	
They are clearly $q$-holonomic, and even $q$-hypergeometric.

The sequences $(u_n)_{n \geq 0} = (q^n)_{n \geq 0}$, $(v_n)_{n \geq 0} = (q^{n^2})_{n \geq 0}$ and  $(w_n)_{n \geq 0} = (q^{\binom{n}{2}})_{n \geq 0}$ are also $q$-hypergeometric,
since they satisfy the recurrence relations
\[ u_{n+1} - q u_n = 0, \quad v_{n+1}- q^{2n+1} v_n=0, \quad w_{n+1}-q^n w_n=0.\]
However, the sequence $(q^{n^3})_{n \geq 0}$ is not $q$-holonomic~\cite[Ex.~2.2(b)]{GaLe16}.
More generally, this also holds for the sequence $(q^{n^s})_{n \geq 0}$, for any $s>2$, see~\cite[Th.~4.1]{Bezivin92} and also \cite[Th.~1.1]{Garoufalidis11}.

Another basic example is the $q$-Pochhammer symbol  
\begin{equation}\label{def:qPochhammer}
(x;q)_n \coloneqq \prod_{k=0}^{n-1} (1- x q^{k}),
\end{equation}
which is also $q$-hypergeometric, since
$(x;q)_{n+1} - (1-xq^{n}) (x;q)_n = 0.$
In particular, the sequence
 $(q;q)_n \coloneqq \prod_{k=1}^{n} (1-q^k)$, also denoted $(q)_n$,
is $q$-hypergeometric and satisfies 
$(q)_{n+1} - (1-q^{n+1}) (q)_n = 0.$ In Section~\S\ref{sec:motiv} we encountered $v_n^{(p,a,b)} = \sum_{k=0}^n p^k q^{ak^2+bk}$, which is $q$-holonomic (see Eq.~(\ref{eq:v})), but generally not $q$-hypergeometric. 

Note that (\ref{eq:qrec}) reduces to a C-linear recurrence, \emph{i.e.} a linear recurrence with \emph{constant} coefficients, if all polynomials $c_0(x,y),\dots,c_r(x,y)$ are constant in the variable $y$. For these kinds of sequences there exist quasi-optimal algorithms~\cite{MiBr66,Fiduccia85,BoMo21}, therefore we assume from now on that the maximal degree $d$ of $c_0(x,y),\dots,c_r(x,y)$ in $y$ is positive.

As mentioned in the introduction, $q$-holonomic sequences show up in various
contexts. As an example, in (quantum) knot theory, the (``colored'') Jones
function of a (framed oriented) knot (in 3-space) is a powerful knot
invariant, related to the Alexander polynomial~\cite{BaGa96}; it is a
$q$-holonomic sequence of Laurent polynomials~\cite{GaLe05}. Its recurrence
equations are themselves of interest, as they {are} closely related 
to the A-polynomial of a knot, via the \emph{AJ conjecture}~\cite{Garoufalidis04,Garoufalidis18,DeGa20},
verified in some cases using massive computer algebra calculations~\cite{GaKo13}.

It is well known that the class of $q$-holonomic sequences is closed under
several operations, such as addition, multiplication, 
{Hadamard product and monomial substitution}~\cite{KPS07,KaKo09,GaLe16}.
All these closure properties
are effective, \emph{i.e.}, they can be executed algorithmically on the level of
$q$-recurrences. Several computer algebra packages are available for the
manipulation of $q$-holonomic sequences, 
\emph{e.g.}, the Mathematica packages
\textcolor{magenta}{\href{https://www3.risc.jku.at/research/combinat/software/ergosum/RISC/qGeneratingFunctions.html}{
\textsf{qGeneratingFunctions}}}~\cite{KaKo09}
{and}
\textcolor{magenta}{\href{https://www3.risc.jku.at/research/combinat/software/ergosum/RISC/HolonomicFunctions.html}{\textsf{HolonomicFunctions}}}~\cite{Koutschan10}, and the Maple packages
\textcolor{magenta}{\href{http://www.hypergeometric-summation.org}{
\textsf{qsum}}}~\cite{BoKo99},
\textcolor{magenta}{\href{http://www.hypergeometric-summation.org}{
\textsf{qFPS}}}~\cite{SprengerKoepf12},
\textcolor{magenta}{\href{https://qseries.org/fgarvan/qmaple/qseries/index.html}{\textsf{qseries}}}
and
\textcolor{magenta}{\href{https://maplesoft.com/support/help/Maple/view.aspx?path=QDifferenceEquations}{\textsf{QDifferenceEquations}}}.

\smallskip A simple but useful fact is that the order-$r$ scalar
$q$-recurrence~\eqref{eq:qrec} can be translated into a
first-order recurrence on {$r\times 1$ vectors}:
\begin{equation}\label{eq:qrec-mat}
\begin{bmatrix}
u_{n+r} \\ \vdots \\u_{n+1}
\end{bmatrix}	
=
\begin{bmatrix}
 -\frac{c_{r-1}}{c_r} & \cdots &  -\frac{c_1}{c_r} &  -\frac{c_0}{c_r}
\\ 1 & \cdots & 0 & 0
\\ \vdots  & \ddots & \vdots & \vdots & 
\\ 0 & \cdots & 1 & 0 
\end{bmatrix}
\times 
\begin{bmatrix}
u_{n+r-1} \\ \vdots \\ u_{n}
\end{bmatrix}.
\end{equation}

In particular, the $N$-th term of {the} $q$-holonomic sequence {$(u_n)$} is simply expressible in terms of the \emph{matrix $q$-factorial}
\begin{equation}\label{eq:MatQfact}
M(q^{N-1}) \cdots M(q)M(1),
\end{equation} where
$M(q^n)$ denotes the companion matrix from equation~\eqref{eq:qrec-mat}. This observation is crucial, since it exposes the connection to the algorithms presented in the previous section.


\subsection{Computation of the $q$-factorial}\label{sec:qfact}

We now give the promised $q$-analogue of Strassen's result on the
computation of $N!$ {in $O(\M(\sqrt{N}) \log N)$ arithmetic operations}.
{Note that Strassen's case $q=1$ is also covered by~\cite[\S6]{BoGaSc07}, 
where the cost $O(\M(\sqrt{N}))$ is reached
under some invertibility assumptions.}

\begin{theorem} \label{thm:qfact}
{Let $\K$ be a field,} 
let $q\in\K\setminus \{1\}$ and $N\in \N$.
The $q$-factorial $[N]_q!$ can be computed using $O(\M(\sqrt{N}))$ operations in~$\K$. The same is true for the $q$-Pochhammer symbol $(\alpha;q)_N$ for any $\alpha\in\K$.
\end{theorem}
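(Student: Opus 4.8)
The plan is to reduce both quantities to a product of linear forms whose arguments lie in a \emph{geometric} progression, and then invoke the baby-step/giant-step scheme of \textsf{Algorithm~2} from \S\ref{sec:DeFeo}, which already attains the target cost $O(\M(\sqrt{N}))$. The only genuinely new ingredient is the treatment of the normalizing factor appearing in the $q$-factorial, which is exactly where the hypothesis $q\neq 1$ enters. I would first dispose of the degenerate case $q=0$, for which $[N]_q!=1$ and $(\alpha;q)_N = 1-\alpha$ (for $N\geq 1$) are computable in $O(1)$ operations, and assume henceforth $q\notin\{0,1\}$.

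For the $q$-Pochhammer symbol, I would observe that $(\alpha;q)_N=\prod_{k=0}^{N-1}(1-\alpha q^k)$ is a product of $N$ linear forms whose coefficients $\alpha q^k$ form a geometric progression of ratio $q$ --- precisely the shape handled by \textsf{Algorithm~2}, with the evaluation constant $1$ and first progression term $\alpha$. Assuming $N=s^2$, I would set $P(x)\coloneqq\prod_{j=0}^{s-1}(1-\alpha q^j\,x)$ and write $(\alpha;q)_N=\prod_{m=0}^{s-1}P(Q^m)$ with $Q\coloneqq q^s$, using the partition $\{sm+j: 0\leq m,j\leq s-1\}=\{0,\dots,N-1\}$. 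The baby step computes the coefficients of $P$ by the divide-and-conquer product of the remark after \cite[Lemma~1]{BoSc05} (based on $P_{2t}(x)=P_t(q^t x)\,P_t(x)$) in $O(\M(\sqrt{N}))$ operations; the giant step evaluates $P$ simultaneously at $1,Q,\dots,Q^{s-1}$ by one chirp transform, legitimate since $Q=q^s\neq 0$, in $O(\M(\sqrt{N}))$ operations; and the final product of the $s$ values costs $O(\sqrt{N})$.

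For the $q$-factorial, I would use $[k]_q=(1-q^k)/(1-q)$ to factor out the normalization, giving
\[
[N]_q! = \prod_{k=1}^{N}\frac{1-q^k}{1-q} = \frac{(q;q)_N}{(1-q)^N}.
\]
The numerator is the special value $(q;q)_N$, i.e.\ the Pochhammer symbol with $\alpha=q$, computed in $O(\M(\sqrt{N}))$ as above, while the denominator $(1-q)^N$ is obtained by binary powering in $O(\log N)$ operations. A single final division then yields $[N]_q!$. This is exactly where $q\neq 1$ is used: it guarantees $1-q\neq 0$, so the power and the division are well defined; it is also the precise point that distinguishes our setting from Strassen's $q=1$ case, which requires the invertibility caveats of \cite{BoGaSc07}.

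I do not expect a serious obstacle: the argument is a reduction to machinery already set up in \S\ref{sec:motiv}. The two points demanding care are the normalizing factor $(1-q)^{-N}$ (handled by binary powering under $q\neq 1$) and the applicability of the chirp transform (secured by $Q=q^s\neq 0$ under $q\neq 0$); note in particular that the evaluation only feeds into a \emph{product}, so no interpolation is involved and nothing breaks even if $q$ is a root of unity. Finally, when $N$ is not a perfect square I would split off the last $N-\lfloor\sqrt{N}\rfloor^2=O(\sqrt{N})$ factors and evaluate them naively, exactly as in the remark following \textsf{Algorithm~1}, which leaves the overall $O(\M(\sqrt{N}))$ bound intact.
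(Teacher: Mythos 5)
Your proposal is correct and follows essentially the same route as the paper: both reduce $[N]_q!$ and $(\alpha;q)_N$ to a product of $N$ linear forms with arguments in geometric progression, handled by the baby-step/giant-step scheme of \textsf{Algorithm~2}, with the normalizing power ($(1-q)^{-N}$ in your version, $r^N=(q/(1-q))^N$ and $\alpha^N$ in the paper's, which instead evaluates $F(x)=\prod_{i=0}^{N-1}(x-q^i)$ at $q^{-1}$ and $\alpha^{-1}$) obtained by binary powering under the hypothesis $q\neq 1$. The only cosmetic difference is that by forming $\prod_{k=0}^{N-1}(1-\alpha q^k)$ directly you avoid inverting $\alpha$, and hence do not need the paper's separate treatment of the case $\alpha=0$.
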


\begin{proof}
{If $\alpha=0$, then $(\alpha;q)_N=1$}.
If $q=0$, then $[N]_q!=1$ {and $(\alpha;q)_N=1-\alpha$}.
We can assume that 
{$q\in\K\setminus \{0, 1\}$} 
{and $\alpha\in\K\setminus \{0 \}$}.
We have {$[N]_q!=  
r^{N} \cdot F(q^{-1})$} 
and {$(\alpha;q)_N =  \alpha^N \cdot F(\alpha^{-1})$},
{where $r\coloneqq q/(1-q)$ and $F(x)\coloneqq\prod_{i=0}^{N-1} (x-q^i)$}. \textsf{Algorithm 2} can be used to compute 
{$F(q^{-1})$ and $F(\alpha^{-1})$}
in $O(\M(\sqrt{N}))$ operations in~$\K$.
The cost of computing $r^{N}$ and $\alpha^{N}$ is $O(\log N)$, 
and thus it is negligible.
\end{proof}

\begin{corollary} \label{coro:qbin}
Under the assumptions of Theorem~\ref{thm:qfact} and for any $n\in\N$, one can compute in $O(\M(\sqrt{N}))$ operations in~$\K$:
\begin{itemize}
\item the $q$-binomial coefficient $\binom{N}{n}_q$;
\item the coefficient of $x^n$ in the polynomial $\prod_{k=1}^N (1+q^{k-1}x)${;}
\item {the sum $\binom{{N-n}}{0}_q \binom{{n}}{0}_q  + q \binom{{N-n}}{1}_q \binom{{n}}{1}_q + \cdots + q^{{n}^2}\binom{{N-n}}{{n}}_q \binom{{n}}{{n}}_q$.}
\end{itemize}
\end{corollary}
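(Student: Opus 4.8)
The plan is to reduce all three items to the single task of computing the $q$-binomial coefficient $\binom{N}{n}_q$, which itself reduces directly to Theorem~\ref{thm:qfact}. First I would handle $\binom{N}{n}_q$ via its definition $\binom{N}{n}_q = [N]_q!/([n]_q!\,[N-n]_q!)$: compute the three $q$-factorials $[N]_q!$, $[n]_q!$, $[N-n]_q!$ using Theorem~\ref{thm:qfact}. Since $n\le N$ and $N-n\le N$, and $\M$ is nondecreasing, each factorial costs at most $O(\M(\sqrt N))$, and assembling them requires only two divisions in~$\K$; hence the total is $O(\M(\sqrt N))$.

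For the second item, I would invoke the $q$-binomial theorem~\eqref{eq:BinomThm} with $n$ replaced by $N$, which identifies the coefficient of $x^n$ in $\prod_{k=1}^N(1+q^{k-1}x)$ as $\binom{N}{n}_q\,q^{\binom n2}$. It then suffices to compute $\binom{N}{n}_q$ as above and multiply by $q^{\binom n2}$, the latter being obtained by binary powering in $O(\log N)$ operations, hence negligible. For the third item, I would recognise the displayed sum as the left-hand side of the $q$-Chu--Vandermonde identity~\eqref{eq:VandChu} specialised to $m=N-n$; that identity collapses the sum to $\binom{(N-n)+n}{n}_q=\binom{N}{n}_q$, so once more the cost is that of a single $q$-binomial coefficient.

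The main obstacle is the validity of the division defining $\binom{N}{n}_q$: the quotient $[N]_q!/([n]_q!\,[N-n]_q!)$ is literally meaningful only when the denominators $[n]_q!$ and $[N-n]_q!$ are nonzero, that is, when $q$ is not a root of unity of order at most $N$ --- precisely the regime of interest flagged in the discussion following \textsf{Algorithm~1}. Rewriting the quotient in the equivalent $q$-Pochhammer form $\binom{N}{n}_q=(q^{N-n+1};q)_n/(q;q)_n$ and computing both factors with the second half of Theorem~\ref{thm:qfact} does not remove this issue, since $(q;q)_n$ vanishes under the same condition; the caveat is therefore intrinsic. I would resolve it by stating this genericity hypothesis explicitly, noting that the finitely many excluded orders form the uninteresting case already set aside, while observing that $\binom{N}{n}_q\in\Z[q]$ is a polynomial, so its value remains well defined for every $q\in\K$ and the excluded orders could in principle be recovered by a separate argument.
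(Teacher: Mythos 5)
Your proof follows exactly the paper's route: the first item via the three $q$-factorials from Theorem~\ref{thm:qfact}, the second via the $q$-binomial theorem~\eqref{eq:BinomThm}, and the third via the $q$-Chu--Vandermonde identity~\eqref{eq:VandChu}, all reducing to a single $q$-binomial coefficient. Your added remark about the division being ill-defined when $q$ is a root of unity of order at most $N$ is a legitimate caveat that the paper's one-line proof silently glosses over, and flagging the genericity hypothesis explicitly is a reasonable way to handle it.
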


\begin{proof}
The first assertion is a direct consequence of Theorem~\ref{thm:qfact}.
The second assertion is a consequence of the first one, and of~\eqref{eq:BinomThm}.
The third assertion is a consequence of the first one, and of~\eqref{eq:VandChu}.
\end{proof}
	
\subsection{$N$-th term of a $q$-holonomic sequence}\label{sec:qNth}

We now  offer the promised $q$-analogue of Chudnovskys' result on
the computation of the $N$-th term of an arbitrary holonomic sequence
{in $O(\M(\sqrt{N}) \log N)$ arithmetic operations}.
{Note that Chudnovskys' case $q=1$ is also covered by~\cite[\S6]{BoGaSc07}, 
where the improved cost $O(\M(\sqrt{N}))$ is reached
under additional invertibility assumptions.}

\begin{theorem} \label{thm:qNth}
Let $\K$ be a field, 
$q\in \K\setminus \{ 1 \}$ and $N\in \N$.
Let  $(u_n(q))_{n \geq 0}$ be a $q$-holonomic sequence satisfying recurrence~\eqref{eq:qrec}, and assume that $c_r(q,q^k)$ is nonzero for $k=0,\ldots, N-1$. Then, 
$u_N(q)$ can be computed in $O(\M(\sqrt{N}))$ operations in~$\K$.
\end{theorem}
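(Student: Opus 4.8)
The plan is to reduce the computation of $u_N(q)$ to a single matrix $q$-factorial that can be fed into \textsf{Algorithm~3}. By~\eqref{eq:qrec-mat} and~\eqref{eq:MatQfact}, the vector $\transp{(u_{N+r-1}, \ldots, u_N)}$ equals $M(q^{N-1}) \cdots M(q) M(1)$ applied to the initial vector $\transp{(u_{r-1}, \ldots, u_0)}$, where $M(q^n)$ is the companion matrix of the recurrence~\eqref{eq:qrec}. The only obstruction to using \textsf{Algorithm~3} directly is that the first row of $M$ carries the denominator $c_r(q, q^n)$, so $M(x)$ is a matrix of rational functions rather than of polynomials.

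First I would clear these denominators. Setting $\tilde{M}(x) \coloneqq c_r(q, x) \cdot M(x)$ yields a genuine polynomial matrix in $x$, whose entries are, up to signs, the polynomials $c_i(q, x)$ of degree at most $d$ in $x$. Since scalars commute with matrix products,
\[
\tilde{M}(q^{N-1}) \cdots \tilde{M}(q) \tilde{M}(1) = \Pi \cdot M(q^{N-1}) \cdots M(q) M(1), \quad \text{where} \quad \Pi \coloneqq \prod_{k=0}^{N-1} c_r(q, q^k).
\]
The hypothesis that $c_r(q, q^k) \neq 0$ for $k = 0, \ldots, N-1$ guarantees that $\Pi$ is a nonzero element of $\K$, hence invertible; this is exactly what will make the final division legitimate.

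Next I would run \textsf{Algorithm~3} on the polynomial matrix $\tilde{M}(x)$ to obtain the left-hand product above: this is precisely the polynomial matrix $q$-factorial the algorithm is designed for, at a cost of $O(\M(\sqrt{N}))$ operations (treating the order $r$ and the degree $d$ as $O(1)$, the precise dependence on $r$ and $d$ being the subject of the complexity analysis that follows). Separately, I would compute the scalar $\Pi$, for instance by applying the same algorithm to the $1 \times 1$ polynomial matrix $[c_r(q, x)]$, again at cost $O(\M(\sqrt{N}))$. Dividing the matrix product by $\Pi$ then recovers $M(q^{N-1}) \cdots M(q) M(1)$ with $O(1)$ further operations in~$\K$.

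Finally, multiplying this matrix by the initial-condition vector $\transp{(u_{r-1}, \ldots, u_0)}$ and reading off the last coordinate gives $u_N(q)$. The total cost is dominated by the two invocations of \textsf{Algorithm~3}, namely $O(\M(\sqrt{N}))$ operations in $\K$. The only genuinely delicate point is the interplay between the denominator clearing and the nonvanishing hypothesis on $c_r$; everything else is a direct transcription of the reduction in~\eqref{eq:qrec-mat}--\eqref{eq:MatQfact} together with \textsf{Algorithm~3}.
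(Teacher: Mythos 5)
Your proposal is correct and takes essentially the same route as the paper: the paper's proof of Theorem~\ref{thm:qNth} simply reduces $u_N(q)$ to the matrix $q$-factorial of the companion matrix via~\eqref{eq:qrec-mat} and invokes \textsf{Algorithm~3} ``mutatis mutandis''. The denominator-clearing step you spell out --- passing to $\tilde{M}(x) = c_r(q,x)\,M(x)$ and dividing by $\Pi = \prod_{k=0}^{N-1} c_r(q,q^k)$, computed as a scalar $q$-factorial --- is exactly how the paper makes this precise later, in the complexity analysis of \S\ref{sec:qNth_complexity}.
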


\begin{proof}
Using equation~\eqref{eq:qrec-mat}, it is enough to show that the matrix
$q$-factorial 
\[
M(q^{N-1}) \cdots M(q)M(1)
\]
can be computed in $O(\M(\sqrt{N}))$,
where $M(q^n)$ denotes the companion matrix from equation~\eqref{eq:qrec-mat}.
{\sf Algorithm 3} adapts \emph{mutatis mutandis} to this effect.
\end{proof}

Remark that if $q$ is a root of unity of order $n<N$, then the computation of $U_N(q) = M(q^{N-1}) \cdots M(q)M(1)$ can be simplified using 
\[
U_N(q) = M(q^{k})\cdots M(1) \cdot U_n(q)^r,
\]
where $r=\floor{(N-1)/n}$ and $k=N-1 - rn$. \textsf{Algorithm~3} is used to compute $U_n(q)$ and then its $r$-th power is then deduced via binary powering. Finally, the product $M(q^{k})\cdots M(1)$ is again computed using \textsf{Algorithm~3}. The total cost therefore consists of just $O(\M(\sqrt{n}) + \log(N))$ arithmetic operations. It follows that if, for instance, the base field~$\K$ is the prime field~$\F_p$, then the prime number~$p$ should be larger than $N$ in order to exhibit the full strength of the presented algorithms.

\begin{corollary} \label{coro:qexp}
Let $\K$ be a field, 
$q\in \K$ {not a root of unity}, and $N\in \N$.
Let $e_q(x)$ be the $q$-exponential series 
\[e_q(x)\coloneqq\sum_{n \geq 0} \frac{x^n}{[n]_q!},\]
and let $E_q^{{(N)}}(x) \coloneqq e_q(x) \bmod x^N$ be its polynomial truncation of degree~$N-1$.
If $\alpha\in\K$, then one can compute $E_q^{{(N)}}(\alpha)$ in $O(\M(\sqrt{N}))$ operations in~$\K$.
\end{corollary}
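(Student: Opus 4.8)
The plan is to exhibit the value $E_q^{(N)}(\alpha)=\sum_{n=0}^{N-1}\alpha^n/[n]_q!$ as the $N$-th term of an explicit order-two $q$-holonomic sequence, and then simply invoke Theorem~\ref{thm:qNth}. So the algorithmic work is entirely delegated; what remains is to write down the recurrence and check its hypotheses.

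First I would examine the summand $t_n\coloneqq \alpha^n/[n]_q!$. Since $[n+1]_q=(1-q^{n+1})/(1-q)$ for $q\neq 1$, the sequence $(t_n)$ is $q$-hypergeometric, satisfying the order-one polynomial $q$-recurrence
\[
(1-q^{n+1})\,t_{n+1}=\alpha(1-q)\,t_n,\qquad t_0=1 .
\]

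Next, exactly as for the sum $v_n^{(p,a,b)}$ in \S\ref{sec:NoSc}, I would pass to the partial sums $v_n\coloneqq\sum_{k=0}^{n-1}t_k$. From $v_{n+1}-v_n=t_n$ together with the recurrence above (which here plays the role of a \emph{rational} $A$, cleared of its denominator $1-q^{n+1}$), one gets the order-two recurrence
\[
(1-q^{n+1})(v_{n+2}-v_{n+1})=\alpha(1-q)(v_{n+1}-v_n),
\]
with $v_0=0$, $v_1=1$. This is of the shape~\eqref{eq:qrec} — indeed the direct analogue of~\eqref{eq:v} — with leading coefficient $c_2(q,q^n)=1-q^{n+1}$, which is a nonzero polynomial in $q,q^n$ of positive degree in the second variable. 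The essential point is that $v_N=\sum_{k=0}^{N-1}\alpha^k/[k]_q!=E_q^{(N)}(\alpha)$ is precisely the quantity sought.

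Finally, I would verify the hypothesis of Theorem~\ref{thm:qNth}: the leading coefficient satisfies $c_2(q,q^k)=1-q^{k+1}\neq 0$ for $k=0,\dots,N-1$, because $q^{j}\neq 1$ for every $j\geq 1$ when $q$ is not a root of unity. For $q\neq 0$, Theorem~\ref{thm:qNth} then computes $v_N$ in $O(\M(\sqrt N))$ operations, giving the claim. I expect the only genuine (and rather minor) obstacle to be this case analysis: the degenerate value $q=0$ is not a root of unity but must be handled directly, since there $[n]_q!=1$ and $E_q^{(N)}(\alpha)=\sum_{k=0}^{N-1}\alpha^k$ is a geometric sum, evaluated in $O(\log N)$. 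Apart from this bookkeeping and the nonvanishing check, the proof is immediate.
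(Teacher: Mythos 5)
Your proof is correct and takes essentially the same route as the paper: identify the summand $\alpha^n/[n]_q!$ as a $q$-hypergeometric sequence, pass to the partial sums to obtain an order-two $q$-recurrence of the form~\eqref{eq:qrec}, and invoke Theorem~\ref{thm:qNth}. The only differences are cosmetic — you clear the denominator $1-q$ so that the leading coefficient becomes the genuine polynomial $1-xy$, and you spell out the nonvanishing check and the degenerate case $q=0$, details the paper's proof leaves implicit.
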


\begin{proof}
Denote the summand $\frac{\alpha^n}{[n]_q!}$ by $u_n(q)$. Then  $(u_n(q))_{n\geq 0}$ is $q$-hypergeometric, and satisfies the recurrence $[n+1]_q u_{n+1}(q) - \alpha u_n=0$, therefore $v_N(q)\coloneqq\sum_{i=0}^{N-1} u_{{i}}(q)$ satisfies the second-order recurrence $[n+1]_q (v_{n+2}(q) - v_{n+1}(q)) - \alpha  (v_{n+1}(q) - v_{n}(q))=0$.
Applying Theorem~\ref{thm:qNth} to $v_{N}(q)$ concludes the proof.
\end{proof}

The same result holds true if $e_q(x)$ is replaced by any power series satisfying a $q$-difference equation. For instance, one can evaluate fast all truncations of Heine's  $q$-hypergeometric series 
\[ _2 {\phi}_1([a,b],[c];q;x) \coloneqq \sum_{n \geq 0} \frac{(a;q)_n (b;q)_n}{(c;q)_n} \cdot \frac{x^n}{(q)_n}.\]

\subsection{Complexity analysis and computation of several terms}\label{sec:qNth_complexity}

Theorem \ref{thm:qNth} established an $O(\M(\sqrt{N}))$ cost of the presented method for computing the $N$-th term of a $q$-holonomic sequence. We now aim at performing a detailed complexity analysis with respect to all input parameters. So we need to discuss the complexity of \textsf{Algorithm 3}, where we assume that $M(x) \in \mathcal{M}_n(\K[x]_{d})$ is an $n\times n$ polynomial matrix of degree~$d\geq 1$. We wish to examine the amount of field operations in~$\K$ needed for the computation of $M(q^{N-1}) \cdots M(1)$ in terms of $N,d$ and $n$. Recall that $\M \M(n,d)$ controls the arithmetic complexity of the product in $\mathcal{M}_n(\K[x]_{d})$ and it holds that $\M \M(n,d) = O(n^\theta d+n^2 \, \M(d)) = \tilde{O}(n^\theta d)$.

First, we will examine the direct application of \textsf{Algorithm 3}, where $s = \sqrt{N}$, to $M(x)$. It turns out that the dominating part is step (1), where we compute $P_s(x) = M(q^{s-1}x) \cdots M(x)$ using the divide-and-conquer scheme $P_{2t}(x) = P_t(q^t x) \cdot P_t(x)$ and $P_{2t+1}(x) = M(q^{2t}x) \cdot P_t(q^t x) \cdot P_t(x)$. Note that $P_t(x)$ is an $n\times n$ polynomial matrix of degree at most $td$ and therefore the cost of this step is $O(\M \M(n,sd)) = \tilde{O}(n^\theta d\sqrt{N})$. Step (2) is done component-wisely at each entry of $P(x)$. By the explained fast chirp transform algorithms, it essentially boils down to $n^2$ multiplications of two polynomials, one of degree $sd$ and the other of degree $2sd$. The cost of the second step is therefore $O(n^2 \, \M(s d)) = \tilde{O}(n^2 d \sqrt{N})$. The last step is the multiplication of $N/s=s$ matrices with entries in $\K$ and has therefore an arithmetic complexity of $O(n^\theta s)
=\tilde{O}(n^\theta \sqrt{N})$.

If $d<N$ is a parameter of interest, then there is a better choice of $s$ rather than~$\sqrt{N}$. We saw that the polynomial $P_s(x)$ has degree $sd$ and we must evaluate it at $N/s$ points. The optimal pick for $s$ is therefore $s=\sqrt{N/d}$, which we again can assume to be integer\footnote{Similarly as before, if $\sqrt{N/d}$ is not an integer, then we can compute $u_{N_1}(q)$ first, where $N_1 = \floor{\sqrt{N/d}}^2d$, and then proceed ``naively''. Note that $N-N_1< 2\sqrt{Nd}-d = O(\sqrt{Nd})$.}. Then, by the same arguments as above, the costs of the three steps are $O(\M \M (n,\sqrt{Nd})) = \tilde{O}(n^\theta \sqrt{Nd})$, $O(n^2 \M(\sqrt{Nd}))=\tilde{O}(n^2 \, \sqrt{Nd})$ and $O(n^\theta \sqrt{Nd})$ respectively.

Now, we address specifically the computation of the $N$-th term in a $q$-holonomic sequence. If $(u_n(q))_{n \geq 0}$ is given by a $q$-recurrence 
\[
c_r(q, q^n) u_{n + r}(q) + \cdots  + c_0(q, q^n) u_n(q) = 0,
\]
for $q\in \K$ and for polynomials $c_j(x,y) \in \K[x,y]$, then as observed before, we can compute $u_N(q)$ via 
\[ 
\frac{1}{c_r(q,q^{N-1}) \cdots c_r(q,q)c_r(q,1)} \cdot 
\begin{bmatrix}
0 & \cdots & 0 & 1
\end{bmatrix}
\times \tilde{M}(q^{N-1}) \cdots \tilde{M}(q) \tilde{M}(1) \times \begin{bmatrix}
 1 \\ \vdots \\ 0
\end{bmatrix},
\]
where now
\[
\tilde{M}(x) \coloneqq c_r(q,x) \cdot  M(x) = \begin{bmatrix}
 -c_{r-1}(q,x) & \cdots &  -c_1(q,x) &  -c_0(q,x)
 \\ c_r(q,x) & \cdots & 0 & 0
\\   & \ddots &  & \vdots & 
\\ 0 & \cdots & c_r(q,x) & 0 
\end{bmatrix}. 
\]
Hence, we are interested in $\tilde{M}(q^{N-1})\cdots \tilde{M}(1)$ and $c_r(q,q^{N-1})\cdots c_r(q,1)$. If the degrees of $c_0(q,y),\dots,c_r(q,y)$ are bounded by $d$, then the considerations above imply that the two $q$-factorials can be computed in $O(\M \M(r,\sqrt{Nd}) + r^2 \M(\sqrt{Nd}))$ and $O(\M(\sqrt{Nd}))$  operations in $\K$, respectively. We obtain the following theorem (compare with  \cite[Thm.~2]{BoClSa05}). 
\begin{theorem}\label{thm:detailed_complexity}
Under the assumptions of Theorem~\ref{thm:qNth}, let $d\geq 1$ be the maximum of the degrees of $c_0(q,y),\dots,c_r(q,y)$. Then, for any $N>d$, the term $u_N(q)$ can be computed in $O(r^\theta \sqrt{Nd}+ r^2 \, \M(\sqrt{Nd}))$ operations in $\K$. \\
\end{theorem}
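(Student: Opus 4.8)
The plan is to reduce $u_N(q)$ to two $q$-factorials and then to run \textsf{Algorithm~3} on each with a balanced baby-step parameter. First I would use equation~\eqref{eq:qrec-mat} together with the reformulation given just before the statement: clearing the leading coefficient $c_r(q,q^n)$ turns the companion recurrence into a product of the polynomial matrices $\tilde{M}(x)=c_r(q,x)\cdot M(x)$, whose $r^2$ entries are (up to sign) the polynomials $c_0(q,x),\dots,c_r(q,x)$, all of degree at most $d$. Thus $u_N(q)$ is recovered by extracting a single entry of the matrix $q$-factorial $\tilde{M}(q^{N-1})\cdots\tilde{M}(1)$ and dividing by the scalar $q$-factorial $c_r(q,q^{N-1})\cdots c_r(q,1)$, so it remains only to bound the cost of these two objects.

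For the matrix $q$-factorial I would apply \textsf{Algorithm~3} to the $r\times r$ degree-$d$ matrix $\tilde{M}$, but with the choice $s=\sqrt{N/d}$ instead of $s=\sqrt{N}$. The reason is a balancing argument: the baby-step output $P_s(x)=\tilde{M}(q^{s-1}x)\cdots\tilde{M}(x)$ has degree $sd$, while the giant step must evaluate it at $N/s$ points; setting $sd=N/s$ forces $s=\sqrt{N/d}$, and then both quantities equal $\sqrt{Nd}$. I would then cost the three steps separately. Step~(1) is run through the divide-and-conquer identity $P_{2t}(x)=P_t(q^tx)\cdot P_t(x)$ (and $P_{2t+1}(x)=M(q^{2t}x)\cdot P_t(q^tx)\cdot P_t(x)$ for odd sizes); the crucial point is that both factors are the \emph{same} matrix $P_t$ up to the substitution $x\mapsto q^tx$, so there is a single recursive call and the costs form a geometric series dominated by the top level, giving $O(\M\M(r,\sqrt{Nd}))=O(r^\theta\sqrt{Nd}+r^2\,\M(\sqrt{Nd}))$. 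Step~(2), performed entrywise by the fast chirp transform on the geometric progression $1,Q,\dots,Q^{N/s-1}$, reduces to $r^2$ polynomial multiplications in size $\sqrt{Nd}$, hence $O(r^2\,\M(\sqrt{Nd}))$; and step~(3), the product of the $N/s=\sqrt{Nd}$ scalar matrices $P(Q^k)$, costs $O(r^\theta\sqrt{Nd})$.

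Finally, the scalar $q$-factorial $c_r(q,q^{N-1})\cdots c_r(q,1)$ is obtained by the same algorithm applied to the $1\times 1$ matrix given by the single polynomial $c_r(q,x)$ of degree $d$, which by the degree-$d$ analysis costs $O(\M(\sqrt{Nd}))$ and is therefore negligible. Summing the three matrix-step bounds and the scalar bound yields the announced estimate $O(r^\theta\sqrt{Nd}+r^2\,\M(\sqrt{Nd}))$. The step I expect to be the main obstacle is the log-free analysis of step~(1): one must argue carefully that the recursion is a \emph{path} rather than a binary tree (because the two factors coincide up to the scaling $x\mapsto q^tx$) and that the odd-size correction adds only a single degree-$d$ factor, so that no spurious $\log N$ factor appears; the remaining routine point is the reduction to an integer $s$, handled by first computing up to $N_1=\lfloor\sqrt{N/d}\rfloor^2 d$ and multiplying the remaining $N-N_1=O(\sqrt{Nd})$ factors naively, which affects none of the estimates.
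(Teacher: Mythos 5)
Your proposal is correct and follows essentially the same route as the paper: clear the leading coefficient to form $\tilde{M}(x)=c_r(q,x)M(x)$, run \textsf{Algorithm~3} with the balanced choice $s=\sqrt{N/d}$, bound the three steps by $O(\M\M(r,\sqrt{Nd}))$, $O(r^2\,\M(\sqrt{Nd}))$ and $O(r^\theta\sqrt{Nd})$, and absorb the scalar $q$-factorial of $c_r$ at cost $O(\M(\sqrt{Nd}))$. Your extra care about the log-free recursion in step~(1) and the rounding of $s$ matches the paper's remarks and footnote, so there is nothing to add.
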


Theorem~\ref{thm:qNth} can be adapted to the computation of \emph{several coefficients} of a $q$-holonomic sequence. The proof is similar to that of Theorem 15 in~\cite{BoGaSc07}, however simpler, because we deal with geometric progressions instead of arithmetic ones.

\begin{theorem} \label{thm:qNth-several}
Under the assumptions of Theorem~\ref{thm:detailed_complexity}, let 
$ N_1 < N_2 < \cdots < N_n = N$ be positive integers, where $n \leq \sqrt{N}$. Then, the terms $u_{N_1}(q),\ldots, u_{N_n}(q)$ can be computed {altogether} in $O(\M(\sqrt{N}) \log N)$ operations in~$\K$.
\end{theorem}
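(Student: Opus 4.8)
The plan is to reduce, exactly as in the proof of Theorem~\ref{thm:qNth}, the computation of each $u_{N_i}(q)$ to that of a truncated matrix $q$-factorial. Writing $\tilde M(x) = c_r(q,x)\,M(x)$ for the polynomial companion matrix from~\eqref{eq:qrec-mat} (as in Theorem~\ref{thm:detailed_complexity}), I would aim to produce the $n$ partial products
\begin{equation*}
\mathcal U_{N_i} \coloneqq \tilde M(q^{N_i-1}) \cdots \tilde M(q)\,\tilde M(1), \qquad i=1,\dots,n,
\end{equation*}
from which each $u_{N_i}(q)$ is recovered in $O(1)$ operations once one also knows the scalar products $\prod_{k=0}^{N_i-1} c_r(q,q^k)$; the latter are themselves truncated $q$-factorials of the $1\times 1$ ``matrix'' $c_r(q,x)$, hence obtained by the very same procedure in the scalar case. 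So everything reduces to computing the $n$ truncated matrix $q$-factorials $\mathcal U_{N_i}$ simultaneously.

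First I would set $s=\lceil\sqrt N\rceil$ and run Steps~(1)--(2) of \textsf{Algorithm~3}: build the block polynomial matrix $P(x)=\tilde M(q^{s-1}x)\cdots\tilde M(x)$ (of degree $O(\sqrt N)$) by the divide-and-conquer scheme, and evaluate it at the geometric progression $1,Q,\dots,Q^{s-1}$ with $Q=q^s$ via the chirp transform, obtaining the giant blocks $B_k=P(Q^k)=\tilde M(q^{s(k+1)-1})\cdots\tilde M(q^{sk})$; this costs $O(\M(\sqrt N))$. A single left-to-right scan then yields all prefix products $\Pi_k\coloneqq B_{k-1}\cdots B_0$ for $k=0,\dots,s$ in $O(\sqrt N)$ operations. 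Writing each index as $N_i=s k_i+r_i$ with $0\le r_i<s$, one has
\begin{equation*}
\mathcal U_{N_i}=P^{(r_i)}(Q^{k_i})\cdot\Pi_{k_i},\qquad\text{where}\qquad P^{(r)}(x)\coloneqq\tilde M(q^{r-1}x)\cdots\tilde M(x)
\end{equation*}
is the length-$r$ prefix of $P$. As the $\Pi_{k_i}$ are already available, the whole problem becomes the simultaneous evaluation of the ``partial blocks'' $P^{(r_i)}(Q^{k_i})$.

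This last point is the crux, and the place where the naive strategy fails: multiplying out the $r_i<s$ factors of each partial block would cost $O(ns)=O(N)$. To avoid this I would reuse the binary tree of intermediate products already formed while computing $P(x)$ — the subproduct tree whose leaves are $\tilde M(x),\tilde M(qx),\dots,\tilde M(q^{s-1}x)$ and whose nodes carry the matrices $\tilde M(q^{b-1}x)\cdots\tilde M(q^a x)$. Each prefix $[0,r_i)$ decomposes canonically into $O(\log s)$ complete-subtree nodes, so $P^{(r_i)}(Q^{k_i})$ is the ordered product of the corresponding $O(\log s)$ node-polynomials evaluated at the single point $Q^{k_i}$. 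I would then group the required evaluations by node: for each node~$\nu$ I collect the sub-family of points $\{Q^{k_i}\}$ requesting it and perform one batched multipoint evaluation, which is cheap precisely because all these points lie in a single geometric progression. The estimate to verify is that, summed over the nodes actually used, the total ``degree plus number of points'' is $O(\sqrt N\log N)$: the used nodes at height $\delta$ number at most $\min(s/2^\delta,n)$, each of degree $O(2^\delta)$, and $\sum_\delta \min(s/2^\delta,n)\,2^\delta=O(\sqrt N\log N)$. Hence all partial blocks are obtained in $O(\M(\sqrt N)\log N)$ operations, and a final $O(n\log s)=O(\sqrt N\log N)$ matrix multiplications assemble the $\mathcal U_{N_i}$ from their $O(\log s)$ node-values and the prefix product $\Pi_{k_i}$.

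The main obstacle is thus entirely concentrated in this simultaneous evaluation of the partial blocks; everything else is a direct rerun of \textsf{Algorithm~3}. The amortization argument above is what makes the cost collapse from $O(N)$ to $\tilde O(\sqrt N)$, and it is here that working with geometric rather than arithmetic progressions — so that the giant-step points and all the batched node evaluations remain inside one geometric progression — renders the argument markedly simpler than the corresponding step in the proof of Theorem~15 of~\cite{BoGaSc07}.
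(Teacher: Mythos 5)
Your proposal is correct, and its first half coincides exactly with the paper's proof: both run \textsf{Algorithm~3} once and observe that step~(3) already produces, for free, all the giant-step prefix products $U_{sj}=M(q^{sj-1})\cdots M(1)$, so that only the remainders $r_i=N_i-sk_i<s$ need to be bridged. Where you genuinely diverge is in how that bridging is amortized. The paper iterates a halving scheme: starting from $d_0=s$ and setting $d_{\ell+1}=\lceil d_\ell/2\rceil$, it forms the single polynomial matrix $P_{d_\ell}(x)=M(q^{d_\ell-1}x)\cdots M(x)$, evaluates it at the $n$ current points $q^{k_j^{(\ell)}}$, and advances exactly those partial products whose remaining gap is at least $d_\ell$; after $O(\log N)$ rounds of cost $O(\M(\sqrt N))$ each, all $U_{N_j}$ are known. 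You instead decompose each prefix $[0,r_i)$ into its $O(\log s)$ canonical dyadic nodes, batch the node evaluations level by level, and control the total work through the count $\sum_\delta\min(s/2^\delta,n)\,2^\delta=O(\sqrt N\log N)$. The two schemes compute essentially the same intermediate matrices --- your node evaluations are the paper's $P_{d_\ell}(q^{k_j^{(\ell)}})$, indexed by the binary expansions of the $r_i$ --- but the paper's round-by-round formulation buys simplicity: it needs no explicit per-level amortization, and it only ever touches the chain of dyadic prefixes $P_1,P_2,P_4,\dots$, which is fortunate because step~(1) of \textsf{Algorithm~3} does \emph{not} build a full subproduct tree (the right half of each product is obtained by the substitution $P_t(q^tx)$, not by a second recursive call); your argument survives this, since every node you invoke is itself of the form $P_{2^e}(q^ax)$, but the phrase ``tree already formed'' should be replaced by ``chain of dyadic prefixes, stored during step~(1)''. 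One caveat applies to both arguments: the evaluation points in the refinement ($q^{k_j^{(\ell)}}$ in the paper, your per-node sub-families of $\{Q^{k_i}\}$) are powers of $q$ but not \emph{consecutive} powers, so the $O(\M(\cdot))$ chirp-transform bound does not apply to this sub-step and one falls back on general fast multipoint evaluation; your closing claim that the geometric-progression structure is what makes the batched node evaluations cheap therefore overstates the case, though the paper's own accounting of this step is equally coarse and the discrepancy is of the logarithmic order the statement tolerates.
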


\begin{proof}
As before, we assume that $N$ is a perfect square; let $s=\sqrt{N}$. Examining the presented algorithms, we notice that on the way of computing the matrix $q$-factorial $U_N \coloneqq M(q^{N-1})\cdots M(q)M(1)$ we obtain the evaluated polynomials 
\[
P(1), P(Q), \dots, P(Q^{s-1}),
\]
where $Q\coloneqq q^s$ and $P(x) \coloneqq M(q^{s-1}x) \cdots M(qx)M(x)$. The $q$-factorial $U_N$ is then found by step (3) by trivially multiplying $P(Q^{s-1}) \cdots P(Q)P(1)$. Observe that while multiplying together from right to left we actually also automatically compute
\[
    P(Q^{j-1}) \cdots P(Q)P(1) = M(q^{sj-1})\cdots M(q)M(1) = U_{sj},
\]
for every $j=1,\dots,s$. It follows that employing \textsf{Algorithm~3} and by simply taking the top right element of each $U_{sj}$, we find not only $u_N$, but actually $u_{s}, u_{2s}, \dots, u_{s^2} = u_N$. This already indicates that simultaneous computation of $s$ terms is achievable in similar complexity after some ``distillation''. In general, we are interested in the sequence of $u_i(q)$ at indices $i=N_1,\dots,N_n$, hence we need to perform the following refinement step.

Let $d_0 \in \N$ be a positive integer with $d_0\leq n \leq \sqrt{N}$ and assume that for some $k_1^{(0)},\dots,k_n^{(0)}$ with $k_j^{(0)} \leq N_j < k_j^{(0)}+2d_0$ we already know the values $U_{k_1^{(0)}},\dots,U_{k_n^{(0)}}$. Then we can use a similar strategy as in step (1) of \textsf{Algorithm~3} and deduce the polynomial matrix $P_{d_0}(x) = M(q^{{d_0}-1}x) \cdots M(qx)M(x)$. Compute then the values $q^{k_1^{(0)}},\dots,q^{k_n^{(0)}}$ and evaluate $P_{d_0}(x)$ simultaneously at them. For each $j=1,\dots,n$ it holds that 
\[
P_{d_0}(q^{k_j^{(0)}}) \cdot U_{k_j^{(0)}} = U_{k_j^{(0)}+d_0}.
\] 
We perform this multiplication for those indices $j$ for which $k_j^{(0)} + d_0 \leq N_j < k_j^{(0)}+2d_0$. For these $j$ we then set $k_j^{(1)} = k_j^{(0)}+d_0$ and let $k_j^{(1)} = k_j^{(0)}$ for the other indices; moreover $d_1 \coloneqq \lceil d_0/2 \rceil$. We iterate this process at most $\ell \coloneqq \lceil \log(d_0) \rceil$ many times until $d_\ell=1$. Then we can easily find $U_{N_1},\dots,U_{N_n}$, from which we finally deduce $u_{N_1},\dots,u_{N_n}$. 

Each such step has a cost of at most $O(\M(n)) = O(\M(\sqrt{N}))$ base field operations. Moreover, after first employing \textsf{Algorithm~3} and by the consideration above, we compute $U_1,U_{s},\dots,U_{s^2}$ in $O(\M(\sqrt{N}))$ base operations (Theorem~\ref{thm:detailed_complexity}). Hence, we may choose $d_0 = s$ and for each $j=1,\dots,n$ let $k_j^{(0)}$ be the largest element in $\{1,s,\dots,(s-1)s\}$ such that $k_j^{(0)} \leq N_j$. Clearly, all conditions of the above refinement step are satisfied and we need at most $\lceil \log(s) \rceil = O(\log N)$ many such steps. The total complexity is henceforth 
$O(\M(\sqrt{N})) + O(\M(\sqrt{N}) \log N)=O(\M(\sqrt{N}) \log N)$.
\end{proof}

The same idea applies in the following corollary which states that if $n<\sqrt{N}/N^\varepsilon$ for some $\varepsilon>0$, then we can omit the $\log$-factor in $N$: 
\begin{corollary} \label{cor:qNth-several}
Under the assumptions of Theorem~\ref{thm:detailed_complexity}, let 
$ N_1 < N_2 < \cdots < N_n = N$ be positive integers, where 
$n < N^{\frac12 - \varepsilon}$ for some $0<\varepsilon<\frac12$.
Then, the terms $u_{N_1}(q),\ldots, u_{N_n}(q)$ can be computed 
{altogether} in $O(\M(\sqrt{N}))$ operations in~$\K$.
\end{corollary}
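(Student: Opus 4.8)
The plan is to reuse, essentially verbatim, the algorithm and the analysis from the proof of Theorem~\ref{thm:qNth-several}, and then observe that under the stronger hypothesis $n < N^{1/2-\varepsilon}$ the one term that carried the spurious factor $\log N$ is now dominated by the leading cost $O(\M(\sqrt N))$. Recall that the proof of Theorem~\ref{thm:qNth-several} splits the work into two parts. First, a single run of \textsf{Algorithm~3} with $s=\sqrt N$ produces, as a byproduct, the matrix $q$-factorials $U_{s}, U_{2s}, \ldots, U_{s^2}=U_N$ at all multiples of $s$, at a total cost of $O(\M(\sqrt N))$ by Theorem~\ref{thm:detailed_complexity} (here, as in Theorem~\ref{thm:qNth-several}, the order $r$ and degree $d$ of the recurrence are treated as $O(1)$). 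Second, a refinement loop localizes each target index $N_j$ inside its length-$s$ grid cell by successively halving the resolution $d_0=s>d_1>\cdots>d_\ell=1$, running for at most $\ell=O(\log N)$ iterations.

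The quantitative point I would isolate is the per-iteration cost already established there: each refinement step computes a single block $q$-factorial $P_{d_i}(x)$ and evaluates its entries at the $n$ current sample points, for a cost of $O(\M(n))$ operations in $\K$. Consequently the entire refinement loop costs $O(\M(n)\log N)$, and the overall cost of the method is
\[
O(\M(\sqrt N)) + O(\M(n)\log N).
\]
In the generality of Theorem~\ref{thm:qNth-several} one only knows $n\le \sqrt N$, so this second summand can be as large as $O(\M(\sqrt N)\log N)$ and therefore dominates; this is precisely where the logarithmic factor originates. The task is thus reduced to showing that, under $n<N^{1/2-\varepsilon}$, this second summand is absorbed by the first.

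Under the present hypothesis I would bound the second summand directly from the quasi-linearity of $\M$. Using $\M(m)=O(m\log m\log\log m)$ together with the monotonicity of $\M$ and $n<N^{1/2-\varepsilon}$, one gets
\[
\M(n)\log N = O\!\left(N^{1/2-\varepsilon}\log^{3} N\right).
\]
Since $\varepsilon>0$ is a fixed constant, $N^{\varepsilon}$ eventually exceeds $\log^{3} N$, so $N^{1/2-\varepsilon}\log^{3} N = O(N^{1/2}) = O(\M(\sqrt N))$, the last step using $\M(\sqrt N)=\Omega(\sqrt N)$. Hence $O(\M(n)\log N)$ is subsumed into $O(\M(\sqrt N))$, and the total cost collapses to $O(\M(\sqrt N))$, as claimed. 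No new construction is needed; the only step requiring a little care is this final asymptotic comparison, namely that the fixed polynomial saving $N^{-\varepsilon}$ outweighs every polylogarithmic loss incurred by the $O(\log N)$ refinement iterations, and I foresee no genuine obstacle beyond it.
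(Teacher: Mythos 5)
Your proposal is correct and follows essentially the same route as the paper: rerun the procedure from Theorem~\ref{thm:qNth-several} unchanged and observe that the refinement loop's cost $O(\M(n)\log N)$ is absorbed by $O(\M(\sqrt N))$ once $n<N^{1/2-\varepsilon}$. The only cosmetic difference is that the paper bounds $\M(n)=O(\M(N^{1/2-\varepsilon}))=O(\M(\sqrt N)\,N^{-\varepsilon})$ via the (super-)linearity of $\M$, whereas you substitute the explicit quasi-linear bound $\M(m)=O(m\log m\log\log m)$; both correctly show that the fixed polynomial saving $N^{-\varepsilon}$ dominates the polylogarithmic losses.
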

\begin{proof}
Here, we follow the exact same procedure as in the proof before. Then, regarding complexity, we use $O(\M(n)) = O(\M(N^{\frac12 - \varepsilon})) = O(\M(\sqrt{N})N^{-\varepsilon})$ and obtain that the same method yields a total arithmetic cost of $O(\M(\sqrt{N})) + O(\M(\sqrt{N})N^{-\varepsilon} \log N) =O(\M(\sqrt{N})).$
\end{proof}

Remark that regarding a detailed complexity analysis for the computation of several coefficients, we have the following trade-off: either we compute at most $\sqrt{N}$ terms in the arithmetic complexity $O((r^\theta d\sqrt{N}+ r^2 \M(d\sqrt{N})) \log N)$, or at most $\sqrt{N/d}$ terms, but in a better cost of $O((r^\theta \sqrt{Nd}+ r^2 \M(\sqrt{Nd})) \log N)$. The proofs combine the considerations above with setting $s=\sqrt{N}$ and $s=\sqrt{N/d}$ respectively. In both cases we can get rid of the log-factor in $N$ like in Corollary~\ref{cor:qNth-several} by computing a factor of $N^\varepsilon$ less terms.

\subsection{The case $q$ is an integer: bit complexity}\label{sec:BinSplit}

Until now, we only considered the arithmetic complexity model, which is very well-suited
to measure the algorithmic cost when working in algebraic structures
whose basic internal operations have constant cost (such as finite fields, or floating point numbers).

Now we discuss here the case where $q$ is an integer (or rational) number. The
arithmetic complexity model needs to be replaced by the bit-complexity model.

Recall that the most basic operations on integer numbers can be performed in quasi-optimal time, that is, in a number of bit operations which is almost linear, up to logarithmic factors, in (the maximum of) their bit size. 
The most basic operation is integer multiplication, for which quasi-linear time algorithms
are known since the early seventies, starting with the famous paper by Sch\"{o}nhage and Strassen~\cite{ScSt71} who showed that two $n$-bit integers can be multiplied in
$O(n \log n \log \log n)$ bit operations.
After several successive improvements, e.g.,~\cite{Furer09,HaHoLe16}, we know as of 2020 that  two $n$-bit integers can be multiplied in time $O(n \log n)$~\cite{HaHo20}. We shall call the cost of multiplying two $n$-bit integers $\M_\Z(n)$.

In this context, the matrix $q$-factorials from~\S\ref{sec:prelim} are computed by \emph{binary splitting} rather than by baby-step/giant-steps.
Recall that this phenomenon already occurs in the usual holonomic setting. For example, the bitsize of $u_N=N!$ is $O(N \log N)$, however both, the ``naive'' method of computing it using $u_n=n u_{n-1}$, or Strassen's baby-steps/giant-steps method, yield worse bit complexity. In the naive approach the problem is that integers of unbalanced bitsize are multiplied together and hence not the full power of fast integer multiplication techniques can be employed. A more clever and very simple way is to just use the fact that 
\[
N! = (1 \cdots \floor{N/2}) \times ((\floor{N/2}+1)\cdots N),
\]
and that the bitsizes of both factors have magnitude $O(N/2 \log(N)) = \tilde{O}(N)$. Thus, fast integer multiplication can be used to multiply them in $\tilde{O}(N)$ bit-complexity. This idea results in the binary splitting \textsf{Algorithm \ref{alg:binsplit}}. This algorithm is very classical, and we only recall it for completeness. 
See~\cite[{\S}12]{Bernstein08} for a good survey on 
this technique, and its applications.
\setcounter{algorithm}{4}
\begin{algorithm}[h]
\begin{algorithmic}[1]
\Function{$\mathcal{F}$}{$A$}
\If{$N=1$}
\State \Return $A[1]$
\EndIf
\State \Return $\mathcal{F}(A[\floor{N/2}+1,\dots,N]) \cdot \mathcal{F}(A[1,\dots,\floor{N/2}]) $
\EndFunction
\end{algorithmic}
\caption{(\textcolor{red}{\textsf{BinSplit}}) \\
\quad {\bf Input}: $A=[a_1,\dots,a_N]$ \hfill list of elements from some arbitrary ring $R$
\\{\bf Output}: $a_N \cdots a_1$}
\label{alg:binsplit}
\end{algorithm}  

Assume that each $a_i$ in the input of \textsf{BinSplit} has at most $k$ bits and let $C(n)$ be the complexity of \textsf{BinSplit} if $A = [a_1,\dots,a_n]$ is $n$-dimensional. It follows that
\[
C(N) \leq 2C(\lceil N/2 \rceil) + \M_\Z(N/2 \cdot k),
\]
where $\M_\Z(n)= \tilde{O}(n)$ is the cost of multiplication of integers with $n$ bits. We obtain $C(N) = \tilde{O}(Nk)$. Hence, using this method, the computation of $u_N = N!$ has $\tilde{O}(N)$ bit-complexity, which is quasi-optimal. Moreover, the same idea applies to any holonomic sequence, by deducing the first order matrix recurrence and computing the matrix product using \textsf{BinSplit}. 

Now we shall see that the $q$-holonomic case is similar. First, let $q$ be a positive integer of $B$ bits and consider the computation of the $q$-factorial 
\[
u_N(q) = (1+q)(1+q+q^2)\cdots(1+q+\cdots+q^{N-1}),
\]
as an illustrative example. For each factor we have the trivial inequalities 
\[
q^n < 1+q+\cdots+q^n < q^{n+1},
\]
meaning that $q^{N(N-1)/2}< u_N(q)<q^{N(N+1)/2}$, so the bitsize of $u_N(q)$ is of magnitude~$N^2B$. The ``naive'' algorithm of deducing $u_N(q)$ by first computing the integers $q^i$, then the corresponding sums and products, has $\tilde{O}(N^3B)$ binary complexity. This method is not (quasi-)optimal with respect to the output size. It is also easy to see that the presented baby-steps/giant-steps based algorithms yield bad bit-complexity as well, despite their good arithmetic cost.

Similarly, if
\[
u_N(q) = \sum_{n=0}^{N-1} q^{n^2},
\]
then the integer $u_N(q)$ is bounded in absolute value from above by $N q^{(N-1)^2}$ and by $q^{(N-1)^2}$ from below, so its bitsize is again of magnitude~$N^2 B$. The ``naive'' algorithm consisting of computing the terms $q^{i}$ one after the other before summing, has again non-optimal bit-complexity $\tilde{O}(N^3 B)$.

Can one do better? The answer is ``yes'' and one can even achieve a complexity which is quasi-linear in the bitsize of the output. Similarly to the holonomic setting, it is sufficient to use the $q$-holonomic character of $u_N(q)$, and to reduce its computation to that of a $q$-factorial matrix as in~\S\ref{sec:NoSc}, which can then be handled with \textsf{BinSplit}. To  be more precise, given an integer or rational number $q$ of bitsize $B$ and any $q$-holonomic sequence $(u_n(q))_{n\geq0}$ defined by polynomials $c_0,\dots,c_r \in \Z[x,y]$ with $c_r(q,q^n) \neq 0$ for any $n \in \N$, we define $M(x) \in \Q(x)$ by 
\[
\begin{bmatrix}
 -\frac{c_{r-1}(q,x)}{c_r(q,x)} & \cdots &  -\frac{c_1(q,x)}{c_r(q,x)} &  -\frac{c_0(q,x)}{c_r(q,x)}
\\ 1 & \cdots & 0 & 0
\\ \vdots  & \ddots & \vdots & \vdots & 
\\ 0 & \cdots & 1 & 0 
\end{bmatrix}.
\]
Then, as observed before, $u_N$ can be read off from
\[
{M}(q^{N-1})\cdots {M}(q){M}(1),
\]
which we aim to compute efficiently. Again, instead of using baby-steps/giant-steps, it is a better idea to use binary splitting by applying \textsf{Algorithm \ref{alg:binsplit}} to $A = [M(1),\dots,M(q^{N-1})]$. Note that obviously, any element in $A$ is a matrix with rational entries of bitsize bounded by $O(NB)$. Therefore, the complexity of \textsf{BinSplit} does not exceed $\tilde{O}(N^2B)$ by the same argument as before, now using fast multiplication of rational numbers. These considerations prove 
\begin{theorem} \label{thm:qNth-bin}
Under the assumptions of Theorem~\ref{thm:qNth}, with $\K=\Q$,
the term $u_N(q)$ can be computed in $\tilde{O}(N^2 B)$ bit operations,
where $B$ is the bitsize of~$q$.
\end{theorem}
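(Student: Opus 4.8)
The plan is to reduce the computation of $u_N(q)$ to a single matrix product over $\Q$ and then bound the bit-cost of evaluating that product via \textsf{BinSplit}. First I would form the companion matrix $M(x)$ with rational-function entries $-c_{j}(q,x)/c_r(q,x)$ as displayed just before the statement, and recall from~\S\ref{sec:prelim} that the $N$-th term $u_N(q)$ is (up to a normalization by $c_r(q,q^{N-1})\cdots c_r(q,1)$) read off from the matrix $q$-factorial $M(q^{N-1})\cdots M(q)M(1)$. Since $q\in\Q$ is fixed with bitsize $B$ and $c_r(q,q^k)\neq 0$ for $k=0,\dots,N-1$ by the standing hypothesis of Theorem~\ref{thm:qNth}, every matrix $M(q^k)$ is a well-defined $r\times r$ matrix over $\Q$.

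Next I would carry out a size estimate on the individual matrices in the list $A=[M(1),\dots,M(q^{N-1})]$. Each entry is a ratio of evaluations of fixed polynomials $c_j(q,y)\in\Z[x,y]$ at $y=q^k$; since $q^k$ has bitsize $O(kB)=O(NB)$ and the $c_j$ have bounded degree and coefficients, both numerator and denominator have bitsize $O(NB)$. Hence every $a_i\in A$ is a matrix over $\Q$ whose entries have bitsize bounded by $O(NB)$, taking $k=O(NB)$ in the bit-size bookkeeping.

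Then I would invoke \textsf{Algorithm~\ref{alg:binsplit}} applied to $A$ and reuse the recurrence $C(N)\leq 2C(\lceil N/2\rceil)+\M_\Z(N/2\cdot k)$ established above it, exactly as in the $N!$ illustration. The only points to check are that (i) the products combined at the top level of the recursion are balanced, so that the full strength of fast integer (here, rational) multiplication applies, and (ii) the bitsize of a product of the $O(N)$ matrices grows additively, staying $\tilde{O}(N^2B)$. Both follow because each of the $O(N)$ factors contributes bitsize $O(NB)$, so $k=O(NB)$ and the master-theorem solution gives $C(N)=\tilde{O}(Nk)=\tilde{O}(N^2B)$, now using fast multiplication of rationals rather than integers; the $r$ and $d$ dependence is absorbed into the $\tilde{O}$. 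Finally, dividing by the scalar $q$-factorial $c_r(q,q^{N-1})\cdots c_r(q,1)$, itself computed by \textsf{BinSplit} on a list of $N$ rationals of bitsize $O(NB)$ in the same $\tilde{O}(N^2B)$ budget, and extracting the appropriate entry, yields $u_N(q)$ within the claimed bound.

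The main obstacle I anticipate is the bookkeeping that the \emph{rational} entries do not blow up beyond $O(NB)$ per factor: one must confirm that reducing to a common denominator across the recursion does not silently multiply denominators uncontrollably, and that the $\tilde{O}$ genuinely hides only polylogarithmic factors in $N$ and the fixed quantities $r,d$. In practice this is routine once one observes that each $M(q^k)$ already has denominator $c_r(q,q^k)$ of bitsize $O(NB)$ and that \textsf{BinSplit} multiplies at most $N$ such matrices, so the accumulated denominator has bitsize $O(N^2B)$, matching the output-size lower bound and confirming quasi-optimality.
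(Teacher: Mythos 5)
Your proposal is correct and follows essentially the same route as the paper: reduce to the matrix $q$-factorial of the companion matrix, observe that each factor $M(q^k)$ has rational entries of bitsize $O(NB)$, and apply \textsf{BinSplit} with the recurrence $C(N)\leq 2C(\lceil N/2\rceil)+\M_\Z(N/2\cdot k)$ to get $\tilde{O}(N^2B)$. The only cosmetic quibble is that you both take $M$ with rational entries $-c_j(q,x)/c_r(q,x)$ \emph{and} divide by $c_r(q,q^{N-1})\cdots c_r(q,1)$ at the end; you need one or the other (the paper uses the former, the normalized polynomial matrix $\tilde M$ would require the latter), but either choice stays within the claimed budget.
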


As a corollary, (truncated) solutions of $q$-difference equations can be
evaluated using the same (quasi-linear) bit-complexity. This result should be
viewed as the $q$-analogue of the classical fact that holonomic functions can
be evaluated fast using binary splitting, a 1988 result by the Chudnovsky
brothers~\cite[{\S}6]{ChuChu88}, anticipated a decade earlier (without proof)
by Schroeppel and Salamin in Item~178 of~\cite{hakmem}.

\section{Applications}\label{sec:app}

\subsection{Combinatorial $q$-holonomic sequences}

As already mentioned, many $q$-holonomic sequences arise in combinatorics, for
example in connection with the enumeration of lattice polygons, where
$q$-analogues of the Catalan numbers $\frac{1}{n+1} \binom{2n}{n}$ occur
naturally~\cite{Gessel80,FuHo85},
or in the enumeration of
special families of matrices with coefficients
in {the finite field}~$\F_q$~\cite{Kirillov95,KiMe97,Yip18},
where sequences related to the
Gaussian coefficients $\binom{n}{k}_q$ also show up.

A huge subfield of combinatorics is the theory of
partitions~\cite{Andrews1976}, where $q$-holonomic sequences occur as early as
in the famous Rogers-Ramanujan identities~\cite{Rogers1893,RoRa19}, see also~\cite[Ch.~7]{Andrews1976}, 
{\emph{e.g.},}
\[1 + \sum_{n \geq 1} \frac{q^{n^2}}{(1-q) \cdots (1-q^n)}
=
\prod_{n \geq 0} \frac{1}{(1-q^{5n+1})(1-q^{5n+4})}
\]
which translates the fact that the number of partitions of $n$ into parts that
differ by at least $2$ is equal to the number of partitions of~$n$ into parts
congruent to $1$ or $4$ modulo $5$. Andrews~\cite{Andrews72,Andrews74}, see also~\cite[Chapter~8]{Andrews1976}, 
laid the foundations of a
theory able to capture the $q$-holonomy of any generating function of a
so-called \emph{linked partition ideal}.

As a consequence, a virtually infinite number of special families of
polynomials coming from partitions can be evaluated fast. For instance, 
the family of truncated polynomials
\[ F_n(x) \coloneqq \prod_{k=1}^\infty (1-x^k)^{{3}} \bmod x^n,\]
can be evaluated fast due to our results and to the identity~\cite[\S6]{Pak06}
\[
F_N(q) 
= \sum_{\binom{n+1}{2}<N} (-1)^n (2n+1) q^{\binom{n+1}{2}}
.
\]

\subsection{Evaluation of $q$-orthogonal polynomials} In the theory of special
functions, \emph{orthogonal polynomials} play {a fundamental} role. There exists
an extension to the $q$-framework of the theory, see \emph{e.g.}, Chapter 9 in
Ernst's book~\cite{Ernst12}. Amongst the most basic examples, the
\emph{discrete $q$-Hermite polynomials}~\cite{AlCa65,Askey89} 
are defined by
their $q$-exponential generating function
\[\sum_{n \geq 0} F_{n,q}(x)\frac{t^n}{[n]_q!} = \frac{e_q(xt)}{e_q(t)e_q(-t)},
\]
and therefore they satisfy the second-order linear $q$-recurrence
\[
F_{n+1,q}(x) = x F_{n,q}(x) - (1-q^n) q^{n-1} F_{n-1,q}(x), \quad  n\geq 1, 
\]
with initial conditions $F_{0,q}(x)=1, F_{1,q}(x)=x$. From there, it follows
that for any $\alpha\in\K$, the sequence $(F_{n,q}(\alpha))_{n\geq 0}$ is $q$-holonomic,
thus the evaluation of the $N$-th polynomial at $x=\alpha$ can be
computed fast. The same is true for the \emph{continuous $q$-Hermite
polynomials}, for which $2 \alpha H_{n,q}(\alpha) = H_{n+1,q}(\alpha) + (1-q^n) H_{n-1,q}(\alpha)$
for $n\geq 1$, and $ H_{0,q}(\alpha)=1, H_{1,q}(\alpha)=2\alpha$. More generally, our results
in \S\ref{sec:main} imply that any family of $q$-orthogonal polynomials can be
evaluated fast.

\subsection{Polynomial and rational solutions of $q$-difference equations} \label{sec:rat_sols_q-diff_eq}

The computation of polynomial and rational solutions of linear differential
equations lies at the heart of several important algorithms, for computing
hypergeometric, d'Alembertian and Liouvillian solutions, for factoring and for computing
differential Galois groups~\cite{PuSi03,AbZi96,APP98}. 
Creative telescoping algorithms (of
second generation) for multiple integration with
parameters~\cite{Chyzak00,Koutschan10} also rely on computing rational
solutions, or deciding their existence. The situation is completely similar
for $q$-difference equations, \emph{i.e.} equations of the form 
\begin{align}\label{eq:diff_eq}
    Ly = a_\nu(x) y(q^\nu x) + a_{\nu-1}(x) y(q^{\nu-1} x) + \cdots +  a_0(x)y(x) = 0,
\end{align}
with $a_j(x) \in \Q[x]$, for all $j=0,\dots,\nu$, such that $a_0(x)a_\nu(x)$ is not identically zero. Improving algorithms for polynomial and rational solutions of such equations is important for finding {$q$}-hypergeometric solutions~\cite{APP98}, for computing $q$-difference
Galois groups~\cite{Hendriks97,ArZh20}, and for performing $q$-creative telescoping~\cite{Koornwinder93,Chyzak00,Koutschan10}.

In both differential and $q$-differential cases, algorithms for computing
polynomial solutions proceed in two distinct phases: (i)~compute a degree
bound~$N$, potentially exponentially large in the equation size; (ii)~reduce
the problem of computing polynomial solutions of degree at most~$N$ to linear
algebra. Abramov, Bronstein and Petkov{\v s}ek showed in~\cite{AbBrPe95} that,
in step (ii), linear algebra in size~$N$ can be replaced by solving a much
smaller system, of polynomial size. However, setting up this smaller system
still requires linear time in~$N$, essentially by unrolling a ($q$-)linear
recurrence up to terms of indices close to~$N$. For differential (and
difference) equations, this step has been improved in~\cite{BoClSa05,BCCS06},
by using Chudnovskys' algorithms for computing fast the $N$-th term of a
holonomic sequence. This allows for instance to decide (non-)existence of
polynomial solutions in sublinear time~$\tilde{O}(\sqrt{N})$. Moreover, when
polynomial solutions exist, one can represent/manipulate them in
\emph{compact~form} using the recurrence and initial terms as a compact
data structure. Similar ideas allow to also compute rational solutions in compact form in the same complexity, see~\cite[Chap.~17]{AECF}.

The same improvements can be transferred to linear $q$-difference equations, in
order to improve the existing algorithms~\cite{Abramov95,AbBrPe95,Khmelnov00}.
In this case,
setting up the smaller system in phase~(ii) amounts to computing the
$N$-th term of a {$q$-holonomic} sequence, and this can be done 
fast using our
results in \S\ref{sec:main}.
{A technical subtlety is that, as pointed out in~\cite[\S4.3]{AbBrPe95}, it is
not obvious in the $q$-difference case how to guarantee the non-singularity
of the $q$-recurrence on the coefficients of the solution. This induces potential technical
complications similar to the ones for polynomial solutions of differential
equations in small characteristic, which can nevertheless be overcome by adapting
the approach described in~\cite[\S3.2]{BoSc09}.}
Similar improvements can be also transferred to systems~\cite{Abramov02,BaClHa19}.

Let us finish this discussion by pointing out briefly an application of these improvements.
Desingularizing a linear differential operator $L(x,\partial_x)$ consists in computing a left multiple with all apparent singularities removed. It is a central task for determining the Weyl closure of~$L$~\cite{Tsai00}. The computation of polynomial solutions of Fourier dual operators is a basic step for performing desingularizations~\cite{ChDuMaMiSa16}. By duality, the order~$N$ of the desingularization corresponds to the degree of polynomial solutions of the dual~$L^\star$ of~$L$.
This remark in conjunction with the fast algorithms for polynomial solutions~\cite{BoClSa05}, themselves based on the fast computation of matrix factorials, allows to speed up the computation of desingularizations.
The situation is similar in other Ore algebras, and in particular for $q$-difference equations. Therefore, our algorithmic improvements in the computation of polynomial solutions of 
 $q$-difference equations,  themselves based on the fast computation of matrix $q$-factorials, have a direct impact on the acceleration of the desingularizations process for $q$-difference equations. It is less obvious to us whether other desingularization algorithms, such as the one from~\cite{KoZh18}, could also benefit from these remarks.

\subsection{Computing curvatures of $q$-difference equations}\label{sec:q-diff_curv}
A natural application of the fast computation of matrix $q$-factorials is the computation of curvatures of $q$-difference equations, since in this area these objects appear quite inherently. Another strong motivation comes from the fact that the $q$-analogue~\cite{Bezivin91} of Grothendieck's conjecture (relating equations over~$\Q$ with their reductions modulo primes~$p$) is proved~\cite{Vizio02, DiHa20}, while the classical differential case is widely open~\cite{Katz72}. Still, in the latter setting algorithms have been developed allowing to compute $p$-curvatures fast~\cite{BoSc09,BoCaSc14,BoCaSc15,BoCaSc16}. They allow, for example, to perform a quick heuristic, but reliable in practice, test for the existence of a basis of algebraic solutions of a linear differential operator~\cite{BoKa09}. 

The $q$-analogue of Grothendieck's conjecture investigates rational solutions of $q$-difference equations. Similarly to the differential setting, one naturally associates to the equation (\ref{eq:diff_eq}) the linear $q$-difference system $Y(qx) = A(x) Y(x)$, where 
\[ 
A(x) = \begin{bmatrix}
        0       & 1 & \cdots & 0\\
        \vdots  &   & \ddots &  \\
        0       & 0 & \cdots & 1\\
        -a_0/a_\nu & -a_1/a_\nu &  \cdots & -a_{\nu-1}/a_\nu
    \end{bmatrix}.
\]
Then it easily follows that if $z(x)$ solves (\ref{eq:diff_eq}), then $(z(x),z(qx),\dots,,z(q^{\nu-1}x))^t$ is a solution of $Y(qx) = A(x) Y(x)$. Moreover, these equations are in some sense equivalent through the $q$-analogues of the Wronskian Lemma and the Cyclic Vector Lemma whenever $q$ is not a root of unity or order smaller than $\nu$; see \cite{DiHa20} for details. If $q$ is considered as a variable, then Di Vizio and Hardouin proved that (\ref{eq:diff_eq}) admits a full set of solutions in $\Q(q,x)$ if and only if for almost all natural numbers $n$,
\[
C_n(x) \coloneqq A(q^{n-1}x)\cdots A(qx) A(x) \equiv \Id_\nu \mod \operatorname{GL}_\nu(R_n(x)),
\]
where $R_n = \Q[q]/\Phi_n(q)$, with $\Phi_n(x)$ the $n$-th cyclotomic polynomial. The elements in the sequence $(C_n(x))_{n\geq1}$ are known as curvatures of the $q$-difference system and are clearly just matrix $q$-factorials.

On the other hand, if $q \in \Q$ then it already follows from main result of \cite{Vizio02} that (\ref{eq:diff_eq}) admits a basis of rational solutions in $\Q(x)$ if and only if for almost all primes $p$,
\[
A(q^{\kappa_p-1} x)\cdots A(qx)A(x) \equiv \Id_\nu \mod p^\ell, 
\]
where $\kappa_p = \text{ord}_p(q)$ and $\ell_p \in \mathbb{Z}$ such that $1-q^{\kappa_p} = p^{\ell_p}\frac{h}{g}$, with $h,g \in \mathbb{Z}$ coprime to $p$. 

On these types of questions there is more progress in the $q$-difference setting than in the classical differential one. Yet, unfortunately, the theorems above are not proven to be effective in the sense that still infinitely many conditions need to be checked in order to conclude the implication we are mostly interested in. Therefore, the computation of any finite number of curvatures only provides a heuristic for the existence of rational solutions of a $q$-difference equation. Moreover, the mentioned algorithms in \S\ref{sec:rat_sols_q-diff_eq} compute rational solutions of equations of type (\ref{eq:diff_eq}) and therefore allow to decide rigorously about the existence of such a basis. However, all these methods have a cost which is potentially exponential in the size of the input. Our goal in this section is to design a fast heuristic test for the existence of a basis of rational solutions of a $q$-difference equation using curvatures.

If $q$ is a variable, we want to check whether $C_n(x) \equiv \Id_\nu$ mod $\operatorname{GL}_\nu(R_n(x))$ for many~$n$. Clearly, after the reduction mod $\Phi_n(q)$, the polynomial $C_n(x)$ has arithmetic size $n^2$ over~$\Q$ and $\tilde{O}(n^3)$ bitsize. Hence, computing $C_n(x)$ is unnecessarily costly. We propose to work over $R_{n,p} \coloneqq \F_p[q]/\Phi_n(q)$ for some (large) prime $p$; moreover, we compute $C_n(x_0)$ for some randomly chosen $x_0 \in \F_p$. Furthermore, in order to avoid computing general cyclotomic polynomials, we compute $C_n(x)$ only for prime numbers $n$. After these considerations, it is easy to see that \textsf{Algorithm~3} applies and allows to deduce $C_n(x_0)$ modulo $\operatorname{GL}_\nu(R_{n,p})$ in $\tilde{O}(\sqrt{n})$ arithmetic operations in $R_{n,p}$, hence $\tilde{O}(n^{3/2})$ operations in $\F_p$. Finally, if we want to deduce this quantity for all primes $n$ between $2$ and some $N \in \N$, it is wiser to apply the accumulating remainder tree method presented in \cite{CoGeHa14, Harvey14}, which allows for quasi-optimal complexity of $\tilde{O}(N^{2})$ in this case.

If $q$ is some rational number and the goal is to test whether $Y(qx) = A(x)Y(x)$ has a full set of rational solutions, one may check 
\[
C_{\kappa_p}(x) = A(q^{\kappa_p-1} x)\cdots A(qx)A(x) \equiv \Id_\nu \mod p^\ell, 
\]
for many primes $p$. Unfortunately, finding the order $\kappa_p$ in practice may be costly, therefore we shall check the weaker assumption $C_{p-1}(x) \equiv \Id_\nu$ mod $p$. Conjecturally, this equality for sufficiently many primes $p$ is also enough to conclude on the existence  of a basis of rational solutions. The presented \textsf{Algorithm~3} allows to compute $C_{p-1}(x_0)$ mod $p$ for some $x_0 \in \F_p$ in $\tilde{O}(\sqrt{p})$ arithmetic cost. Finally, again, if we choose $N,x_0 \in \N$, then the accumulating remainder tree allows to deduce $C_{p-1}(x_0)$ mod $p$ for all primes $p$ between $2$ and $N$ quasi-optimally in $\tilde{O}(N)$ bit operations.

\subsection{$q$-hypergeometric creative telescoping}

In the case of differential and difference hypergeometric creative
telescoping, it was demonstrated in~\cite{BCCS06} that the 
compact representation for polynomial solutions can be used as an efficient
data structure, and can be applied to {speed up} the computation of Gosper
forms and Zeilberger's classical summation algorithm~\cite[\S6]{PWZ96}. The
key to these improvements lies in the fast computation of the $N$-th term of a
holonomic sequence, together with the close relation between Gosper’s
algorithm and the algorithms for rational solutions.

Similarly, in the $q$-difference case, Koornwinder's $q$-Gosper
algorithm~\cite[\S5]{Koornwinder93} is closely connected to Abramov's
algorithm for computing rational solutions~\cite[\S2]{Abramov95}, and this
makes it possible to transfer the improvements for rational solutions to the
$q$-Gosper algorithm. This leads in turn {to} improvements upon
Koornwinder's algorithm for $q$-hypergeometric summation~\cite{Koornwinder93},
along the same lines as in the differential and difference {cases}~\cite{BCCS06}.

\section{Experiments}\label{sec:exp}		

{\sf Algorithms 1} and~{\sf 2} were implemented in \textcolor{magenta}{\href{http://magma.maths.usyd.edu.au}{\sf Magma}} and \textsf{Algorithm 3} in  \textcolor{magenta}{\href{https://maplesoft.com}{\sf Maple}}. All implementations deliver some encouraging timings. Of course, since these algorithms are designed to be fast in the \emph{arithmetic model}, it is
natural to make experiments over a finite field $\mathbb{K}$, or over
truncations of real/complex numbers, as was done in~\cite{NognengSchost18} for
the problem in \S\ref{sec:NoSc}. 

Recall that both {\sf Algorithms 1} and~{\sf 2} compute $\prod_{i=0}^{N-1} (\alpha - q^i) \; \in \mathbb{K}$, given $\alpha,q$
in a field~$\mathbb{K}$, and $N\in\mathbb{N}$, whereas \textsf{Algorithm~3} finds $M(q^{N-1})\cdots M(q) M(1) \in \K^{n \times n}$ for a given polynomial matrix $M(x) \in \K[x]$ of size $n \times n$ and $q \in \K, N\in\N$. In our experiments, $\mathbb{K}$ is the finite field $\mathbb{F}_p$ with $p=2^{30}+3$ elements.

Timings for {\sf Algorithms 1} and~{\sf 2} are presented in Table~\ref{tab:timings}. We compare the straightforward iterative algorithm (column~{\sf Naive}), to the fast
baby-step/giant-step algorithms, one based on subproduct trees and
resultants (column~{\sf Algorithm 1}), the other based on multipoint
evaluation on geometric sequences (column~{\sf Algorithm~2}).

\begin{table}[t]
  \centering
\begin{tabular}{|c|S|S|S|}
  \hline
 degree $N$ & {\sf Naive algorithm} & {\sf Algorithm 1} & {\sf Algorithm 2}\\
  \hline
  $2^{16}$ & 0.04 & 0.03 & 0.00 \\
  $2^{18}$ & 0.18 & 0.03 & 0.01\\
  $2^{20}$ & 0.72 & 0.06 & 0.01\\
  $2^{22}$ & 2.97 & 0.14 & 0.02\\
  $2^{24}$ & 11.79 & 0.32 & 0.04\\
  $2^{26}$ & 47.16 & 0.73 & 0.08 \\
  $2^{28}$ & 188.56 & 1.68 & 0.15 \\
  $2^{30}$ & 755.65 & 3.84 & 0.31 \\
  $2^{32}$ & 3028.25 
		& 8.65 & 0.64 \\
\hline
  $2^{34}$ & 
		& 19.65 & 1.41 \\
  $2^{36}$ & 
		& 44.42 & 2.96 \\
  $2^{38}$ &  
		& 101.27 & 6.36 \\
  $2^{40}$ & 
		& 228.58  & 14.99 \\
  $2^{42}$ &  & 515.03 & 29.76 \\
  $2^{44}$ &  & 1168.51 & 61.69 
				\\
  $2^{46}$ &  & 2550.28  & 137.30 
				\\
\hline
  $2^{48}$ &  & 
		& 297.60  
				\\
  $2^{50}$ &  & 
		& 731.63  
				\\
  $2^{52}$ &  &  & 1395.33 
				\\
  $2^{54}$ &  &  & 3355.39 \\
  \hline
\end{tabular}
\caption{
 Comparative timings (in seconds) for the computation of $\prod_{i=0}^{N-1} (\alpha - q^i) \; \in \mathbb{F}_{p}$, with 
$p=2^{30}+3 
$ 
and $(\alpha,q)$ randomly chosen in $\mathbb{F}_{p}\times \mathbb{F}_{p}$.
All algorithms were executed on the same machine, running {\sf Magma v. 2.24}. 
For each target degree~$N$, each execution was limited to 1~hour.
{\sf Naive algorithm} could reach degree $N=2^{32}$, {\sf Algorithm~1} degree $N=2^{46}$, and {\sf Algorithm~2} degree $N=2^{54} = 
8\,014\,398\,509\,481\,984$.
By extrapolation,
the {\sf Naive algorithm} would have needed
$\approx 4^{11} \times 3028.25$ sec. 
$\approx 400$~years on the same instance, and {\sf Algorithm 2} approximately $18$ hours. 
}
\label{tab:timings}
\end{table}

Some conclusions can be drawn by analyzing these timings:
\begin{itemize}
\vspace{-0.1cm}
  \item The theoretical complexities are perfectly reflected in practice: as $N$ is increased from $2^{2k}$ to $2^{2k+2}$, timings are also multiplied (roughly) by 4 in column {\sf Naive}, and (roughly) by $2$ in columns
{\sf Algorithm 1} and {\sf Algorithm 2}.
  \item The asymptotic regime is reached from the very beginning.
  \item {\sf Algorithm 2} is always faster than {\sf Algorithm 1}, which is itself much faster than the {\sf Naive algorithm}, as expected.
  \item A closer look into the timings shows that for {\sf Algorithm 1}, $\approx 80\%$ of the time is spent in step~(3) (resultant computation), the other steps taking $\approx 10\%$ each; for {\sf Algorithm 2}, step (1) takes $\approx 25\%$, step~(2)
takes $\approx 75\%$, and step~(3) is negligible.
\end{itemize}

In order to visualize the performance of \textsf{Algorithm~3}, we took random polynomial matrices in $\mathcal{M}_n(\F_p[x])$ of degree $d$. Figure~\ref{fig:n} compares the time needed to compute the Matrix $q$-factorial for $d=1$ and $n=2,4,8$ as $N$ growths. The black lines represent the best linear fits to the data points and are given by the equations $y=0.56x-14.1, y=0.59x-12.5$ and $y=0.53x-9.0$ respectively. Note that we are plotting on a log-log scale, therefore the established complexity of $\tilde{O}(N^{1/2})$ is indicated by the coefficient of $x$ in these linear fits which is always only slightly greater than $1/2$. In the same figure, we also show the timings for $d=1$ and $n=2$ of the naive algorithm given by successively computing and multiplying $M(q^i)$ together. The best linear fit almost perfectly describes this data and has a slope of $1.04 \approx 1$, in line with the linear complexity in $N$. 
\begin{figure}[t] 
\centering
\begin{tikzpicture}[scale=1]
\begin{axis}[
    xmode=log,
    ymode=log, 
    log basis x={2},
    log basis y={2},
    xlabel={$N$},
    ylabel={Time in seconds},
    xmin=2^9, xmax=2^39,
    ymin=0, ymax=260,
    xtick={2^10,2^14,2^18,2^22,2^26,2^30,2^34,274000000000},
    ytick={2^(-6),2^(-4),2^(-2),1,2^2,2^4,2^6,2^8},
    legend pos=south east,
    ymajorgrids=true,
    grid style=dashed,
]
\addplot[
    color=blue,
    mark=x,
    ]
    coordinates {(2^12,.125) (2^14,.485) (2^16,1.875) (2^18,9.640) (2^20,39.516) (2^22,160.750)};
\addplot[
    color=blue,
    mark=o,
    ]
    coordinates { (2^14,.15e-1) (2^16,.31e-1) (2^18,.62e-1) (2^20,.125) (2^22,.234) (2^24,.656) (2^26,1.156) (2^28,2.359) (2^30,5.156) (2^32,12.421) (2^34,27.546) (2^36,69.578) (2^38,189.218)};
\addplot[
    color=green,
    mark=square,
    ]
    coordinates { (2^12,.31e-1) (2^14,.62e-1) (2^16,.109) (2^18,.218) (2^20,.468) (2^22,1.468) (2^24,3.093) (2^26,7.734) (2^28,16.640) (2^30,34.140) (2^32,110.406) (2^34,178.953) };
\addplot[
    color=red,
    mark=triangle,
    ]
    coordinates { (2^10,.62e-1) (2^12,.171) (2^14,.328) (2^16,1.062) (2^18,1.484) (2^20,3.796) (2^22,6.453) (2^24,11.031) (2^26,24.125) (2^28,62.046) (2^30,143.640) };
\addplot[
    color=black,
    ]
    coordinates { (2^12, 0.117409) (2^22, 162.6235)};
\addplot[
    color=black,
    ]
    coordinates { (2^12, 0.005828) (2^38, 136.74)};
\addplot[
    color=black,
    ]
    coordinates { (2^12, 0.02373) (2^34, 190.0717)};
\addplot[
    color=black,
    ]
    coordinates { (2^10, 0.07837233 ) (2^30, 127.7553)};

\legend{$d=1;n=2$; naive}
\addlegendentry{$d=1$; $n=2$}
\addlegendentry{$d=1$; $n=4$}
\addlegendentry{$d=1$; $n=8$}
\end{axis}
\end{tikzpicture}
\caption{Timings of \textsf{Algorithm~3} implemented in \textsf{Maple~2020.2}. We compare $d=1$ and $n=2,4,8$ for various values of $N$.}
\label{fig:n}
\end{figure}
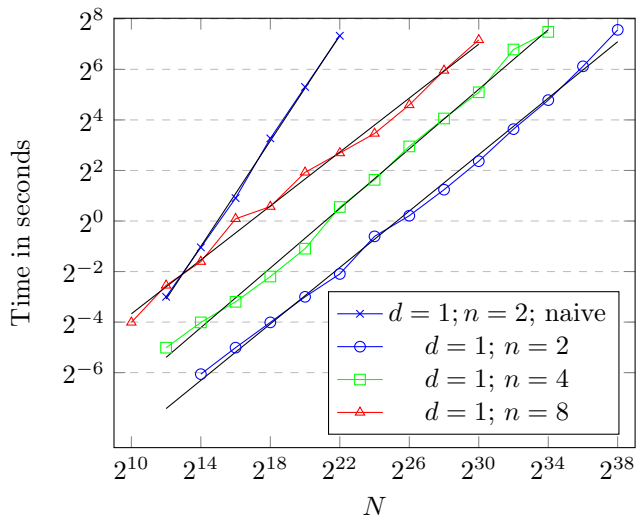

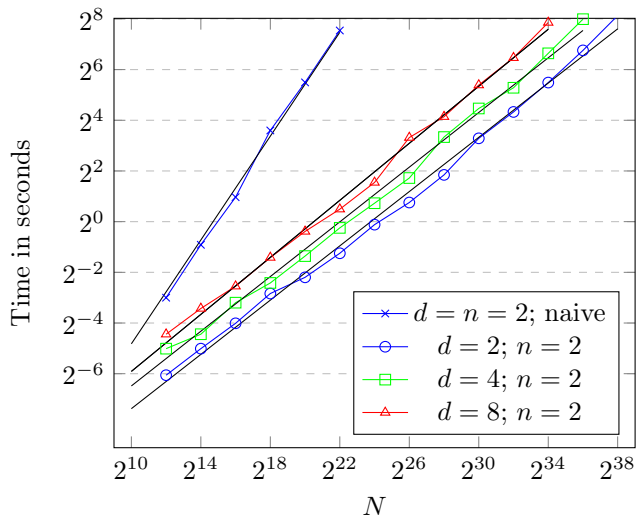
\begin{figure}[h] 
\centering 
\begin{tikzpicture}[scale=1]
\begin{axis}[
    xmode=log,
    ymode=log, 
    log basis x={2},
    log basis y={2},
    xlabel={$N$},
    ylabel={Time in seconds},
    xmin=2^9, xmax=2^39,
    ymin=0, ymax=260,
    xtick={2^10,2^14,2^18,2^22,2^26,2^30,2^34,274000000000},
    ytick={2^(-6),2^(-4),2^(-2),1,2^2,2^4,2^6,2^8},
    legend pos=south east,
    ymajorgrids=true,
    grid style=dashed,
]
\addplot[
    color=blue,
    mark=x,
    ]
    coordinates {(2^12,.125) (2^14,0.531) (2^16,1.953) (2^18,12.110) (2^20,45.062) (2^22,186.438) };
\addplot[
    color=blue,
    mark=o,
    ]
    coordinates {(2^12,.15e-1) (2^14,.31e-1) (2^16,.62e-1) (2^18,.140) (2^20,.218) (2^22,.421) (2^24,0.921) (2^26,1.687) (2^28,3.593) (2^30,9.765) (2^32,20.093) (2^34,44.875) (2^36,108.203) (2^38,280.593)};
\addplot[
    color=green,
    mark=square,
    ]
    coordinates {(2^12,.31e-1) (2^14,.46e-1) (2^16,.109) (2^18,.187) (2^20,.390) (2^22,.843) (2^24,1.656) (2^26,3.296) (2^28,10.109) (2^30,22.093) (2^32,38.937) (2^34,99.828) (2^36,254.093)};
\addplot[
    color=red,
    mark=triangle,
    ]
    coordinates {(2^12,.46e-1) (2^14,.93e-1) (2^16,0.171) (2^18,.375) (2^20,.765) (2^22,1.406) (2^24,2.921) (2^26,9.968) (2^28,17.640) (2^30,41.906) (2^32,88.406) (2^34,231.890)};
\addplot[color=black,]
    coordinates { (2^10, 0.0353872 ) (2^22, 177.3993)};  
\addplot[color=black,]
    coordinates { (2^10, 0.00600497) (2^38, 195.4196)};  
\addplot[color=black,]
    coordinates { (2^10, 0.01117887 ) (2^36, 186.7119)};  
\addplot[color=black,]
    coordinates { (2^10, 0.01663513 ) (2^34, 192.8559)};  
\addplot[color=black,]
    coordinates { (2^10, 0.01663513 ) (2^34, 192.8559)};  
\legend{$d=n=2$; naive}
\addlegendentry{$d=2$; $n=2$}
\addlegendentry{$d=4$; $n=2$}
\addlegendentry{$d=8$; $n=2$}
\end{axis}
\end{tikzpicture}
\caption{Timings of \textsf{Algorithm~3} implemented in \textsf{Maple~2020.2}. We compare $n=2$ and $d=2,4,8$ for various values of $N$.}
\label{fig:d}
\end{figure}

In Figure~\ref{fig:d} we show similar timings, however now for $n=2$ and $d=2,4,8$. The linear fits to the data are now given by $y=0.54x-12.7, y=0.54x-11.9$ and $y=0.56x-11.5$. Again, they describe the observations very well and the coefficients of the regressions are in line with the proven complexity. We observe that, as expected, the lines have slopes of roughly $1/2$ and are closer together than in the previous figure. The naive method has a slope of $1.02 \approx 1$.


\section{Conclusion and future work}\label{sec:conclusion}

We have shown that selected terms of $q$-holonomic sequences can be computed
fast, {both in theory and in practice},
the key being the extension of classical algorithms in the holonomic (``$q=1$'') case.
We have demonstrated through {several} examples that this basic algorithmic
improvement has many other algorithmic implications, notably on the faster
evaluation of many families of polynomials and {on} the acceleration of
algorithms for $q$-difference equations.

{Here are some questions that should be investigated in the future.}
\begin{itemize}
\item[1.] (Counting points on $q$-curves) Counting efficiently points on
(hyper-)elliptic curves leads to questions like: for $a,b\in \Z$, compute the
coefficient of $x^{\frac{p-1}{2}}$ in $G_p(x)\coloneqq(x^3+ax+b)^{\frac{p-1}{2}}$ modulo
$p$, for one~\cite{BoGaSc07} or several~\cite{Harvey14} primes~$p$. A natural
extension is to ask the same with $G_p(x)$ replaced by
$\prod_{{k=1}}^{\frac{p-1}{2}} (q^{3k} x^3+aq^kx+b)$. This might have applications related to~\S\ref{sec:q-diff_curv}, or to counting points on
$q$-deformations~\cite{Scholze17}. 

\smallskip
\item[2.] (Computing $q$-deformed real numbers) Recently, 
	Morier-Genoud and Ovsienko~\cite{MoOv19} introduced $q$-analogues of 
	real numbers, see also~\cite{LeStQu15,MoOv20}. How fast can one compute (truncations or evaluations of) 
	quantized versions of numbers 
	like $e$ or $\pi$?

\smallskip
	\item[3.] (Evaluating more polynomials) Is it possible to evaluate 
	fast polynomials of the form $\sum_{\ell=0}^N x^{\ell^s}$, for $s\geq 3$, 
	and many others {that} escape the $q$-holonomic class?
	{\em E.g.}, \cite{BeDeLeSm20} presents a beautiful generalization of {\sf Algorithm 1}
	  to the fast evaluation of isogenies between elliptic curves, by
	  using \emph{elliptic resultants}, with applications to isogeny-based
	  cryptography.
	  
\smallskip
	\item[4.] (``Precise'' complexity of $q$-holonomic sequences) Can one prove non-trivial lower bounds, ideally matching the upper bounds, on examples treated in this article?
\end{itemize}			


\noindent {\bf Acknowledgements.}
We would like to thank Luca De Feo for his initial question, who motivated this work, and for the very interesting subsequent discussions. 
Our warm thanks go to Lucia Di Vizio for valuable comments that led to \S\ref{sec:q-diff_curv}. Moreover, we thank her and Kilian Raschel for their careful reading of the first version of this manuscript. \\
The first author was supported in part by
\textcolor{magenta}{\href{https://specfun.inria.fr/chyzak/DeRerumNatura/}{DeRerumNatura}}
ANR-19-CE40-0018 and the second author by the \href{https://www.fwf.ac.at/}{Austrian
Science Fund} (FWF) P-31338.



\def\gathen#1{{#1}}\def\cprime{$'$}
  \def\gathen#1{{#1}}\def\haesler#1{{#1}}\def\hoeij#1{{#1}}

\end{document}